\documentclass[10pt]{article}
\usepackage{preamble,comment}
\newcommand{\nc}{\newcommand}

\nc{\on}[1]{\operatorname{#1}}
\nc{\ol}[1]{\overline{#1}}
\nc{\wt}[1]{\widetilde{#1}}
\nc{\vp}{\varphi}
\nc{\eps}{\varepsilon}

\newcommand\restr[2]{{
  \left.\kern-\nulldelimiterspace
  #1 
  \vphantom{\big|}
  \right|_{#2}
}}

\nc{\R}{\mathbb R}
\nc{\C}{\mathbb C}
\nc{\Q}{\mathbb Q}
\nc{\Z}{\mathbb Z}
\nc{\N}{\mathbb N}
\nc{\F}{\mathbb F}
\nc{\D}{\mathbb D}
\nc{\ind}{\mathbf{1}}

\let\E\relax
\DeclareMathOperator{\E}{\mathbb{E}}

\nc{\Proj}{\operatorname{Proj}}
\nc{\Unif}{\operatorname{Unif}}

\nc{\CH}{\mathcal H}
\nc{\CL}{\mathcal L}
\nc{\CD}{\mathcal D}
\nc{\CZ}{\mathcal Z}
\nc{\CP}{\mathcal P}
\nc{\CF}{\mathcal F}
\nc{\CG}{\mathcal G}
\nc{\CN}{\mathcal N}
\nc{\CX}{\mathcal{X}}
\nc{\CY}{\mathcal{Y}}
\nc{\CU}{\mathcal{U}}
\nc{\CA}{\mathcal{A}}
\nc{\CE}{\mathcal{E}}
\nc{\CI}{\mathcal{I}}
\nc{\CC}{\mathcal{C}}
\nc{\CR}{\mathcal{R}}
\nc{\CS}{\mathcal{S}}
\nc{\CB}{\mathcal{B}}
\nc{\CO}{\mathcal{O}}

\nc{\defn}[1]{\textit{#1}}
\nc{\eg}{\emph{e.g.,~}}
\nc{\ie}{\emph{i.e.}}

\nc{\Sym}{\on{Sym}}

\newcommand{\hatket}[1]{
\ket{\smash{\hat{#1}}\vphantom{#1}}
}

\newcommand{\hatbraket}[2]{
\braket{\smash{\hat{#1}}\vphantom{#1}}{#2}
}
\newcommand{\brahatket}[2]{
\braket{#1}{\smash{\hat{#2}}\vphantom{#2}}
}

\nc{\clip}{\on{clip}}

\newcommand{\sddots}{\mathrel{\scalebox{0.6}{$\ddots$}}}
\newcommand{\poly}{\on{poly}}

\usepackage{color-edits}
\addauthor[Dutch]{dh}{blue}
\addauthor[to-do]{t}{red}

\title{Quantum 1-PCA with Pauli Measurements in \\Nearly Linear Time}
\author{
    Dutch Hansen\thanks{\texttt{jmhans13@cs.washington.edu}. This work was supported in part by an NSF Graduate Research Fellowship, and by the Institute for the Foundations of Machine Learning (IFML).} \and
    Jerry Li\thanks{\texttt{jerryzli@cs.washington.edu}. This work was supported in part by the Institute for the Foundations of Machine Learning (IFML).}
}
\date{
}

\begin{document}

\maketitle
\begin{abstract}

We consider the problem of quantum 1-PCA: given copies of an unknown $n$-qubit mixed state, recover a classical description of its leading eigenvector.
Our goal is to do so using non-adaptive and single-qubit measurements.
For an $n$-qubit state with top eigenvalue $\lambda$ and spectral gap at least $\Delta > 0$, we give an algorithm that recovers the leading eigenvector to fidelity at least $1 - \varepsilon$ with high probability using $\tilde{O}\left( {2^n \cdot \eta^2} / {\Delta^3 \varepsilon^3}\right)$ copies and $\tilde{O}((2^n/\Delta \varepsilon) \cdot \poly(\eta/\Delta\varepsilon))$ time, where $\eta = \max (1 - \lambda, \varepsilon)$.
All of our measurements are non-adaptively chosen, and performed in single-qubit Pauli bases.
When the spectral gap is constant and the desired accuracy is comparable to the noise level, i.e. $\varepsilon = \Omega (\eta)$, our runtime and copy complexity become $\tilde{O} (2^n / \varepsilon)$.
This generalizes the guarantees of~\cite{grewal2026nearly}, who achieved similar rates, but under the assumption $\eta = 0$, \emph{i.e.}, that the state was pure. 
Our results show that the same rates hold in the presence of state misspecification, up to polylogarithmic factors.

From a technical perspective, our algorithm works by recursively constructing low-dimensional subspaces that approximately preserve the target eigenvector.
To achieve nearly linear runtime dependence on the dimension of the Hilbert space, we develop a novel structured Pauli sampling scheme that enables fast batched computation of exponentially many projected Pauli matrices.
\end{abstract}



\section{Introduction}

A natural task in quantum learning is \emph{quantum PCA}, where the goal is to extract the best pure state approximation of an unknown $n$-qubit mixed state $\rho$, given measurement access to many copies of $\rho$.
For most metrics of interest, this is achieved by outputting the state which is the top eigenvector of $\rho$.
This is a well-studied problem in physics as well, where it is closely related to the problem of \emph{state purification}; see \emph{e.g.},~\cite{berthiaume1994stabilisation,barenco1997stabilization,werner1998optimal,peres1999error,cirac1999optimal,keyl2001rate,lloyd2014quantum,fu2016quantum,childs2025streaming}.
At a high level, this is because in many settings, we expect that the ``true'' state of interest should be some pure state.
However, due to issues such as model misspecification or noise in the state preparation, the resulting state we observe $\rho$ ends up being mixed with additional noise.
In such situations, the natural recovery question is not to recover $\rho$ itself, but rather its top eigenvector.

Formally, we consider the following setting: we are given copies of an $n$-qubit state $\rho$.
We assume that the top eigenvalue of $\rho$ is $\lambda$, and its second eigenvalue is $\lambda_2 < \lambda$, so that the top eigenvector of $\rho$, denoted $\ket{\psi}$, is unique.
We will also let $\Delta = \lambda - \lambda_2$ denote the spectral gap of the state.
Our goal is to output a state $\hatket{\psi}$ so that $\left|\hatbraket{\psi}{\psi}\right|^2 \geq 1 - \varepsilon$, for some target error $\varepsilon$.

By now, it is well understood how to solve this problem with access to arbitrary measurements.
Seminal work of~\cite{odonnell2017efficient} demonstrated that one can learn the optimal rank-1 approximation of an arbitrary mixed state to error $\varepsilon$ in $\tilde{O}(2^n / \varepsilon)$ copies, and it is also known that this is tight~\cite{hayashi1998asymptotic,BrussMacchiavello1999,scharnhorst2025optimal}.
That is, quantum PCA is possible with copy complexity which scales nearly linearly in the dimension of the Hilbert space.
However, the algorithm requires measurements which are heavily entangled both across the $n$ qubits of the state and across multiple copies of the state.
This renders the algorithm largely out of reach for existing or near-term quantum devices, except perhaps for very small-scale experiments.

On the other hand, another recent breakthrough work of~\cite{grewal2026nearly} demonstrated that one can match these statistical guarantees using single-qubit, single-copy measurements, a much more restricted but also realistically implementable class of measurements, but for the special case where the state $\rho$ is exactly pure.
That is, they show that if $\rho = \ketbra{\psi}$, there is an algorithm which uses $\tilde{O}(2^n / \varepsilon)$ copies and runtime, which only uses non-adaptively chosen single-qubit Pauli measurements, and which outputs a state $\hatket{\psi}$ which is $\varepsilon$-close to $\ketbra{\psi}$ in fidelity.
Note that in their setting, this is equivalent to learning the original state $\rho$ to infidelity $\varepsilon$.
However, their algorithm heavily leverages the fact that $\rho$ is exactly rank $1$, and to the best of our knowledge, fails to generalize even in the presence of very small amounts of noise.

This begs a natural question: can we obtain the best of both worlds?
That is, can we obtain an algorithm for quantum PCA that uses non-adaptively chosen, single-qubit Pauli measurements, but which works for general mixed states?
Stated simply, our central question is as follows:
\begin{center}
    {\it What is the complexity of quantum PCA with single-qubit measurements?}
\end{center}

Our main result is a new algorithm for quantum PCA with single-qubit measurements which still has copy complexity which is nearly linear in the dimension, and which also runs in nearly linear time in the dimension.
Up to polylogarithmic factors, our result fully generalizes the result of~\cite{grewal2026nearly}.
In fact, for a natural regime of parameters, our guarantees match (up to polylogarithmic factors) the copy complexity of quantum PCA with fully general measurements.
More concretely, we show:
\begin{theorem}
    Let $\rho$ be an unknown $n$-qubit quantum state with top eigenvalue $\lambda$, second eigenvalue $\lambda_2 < \lambda$, and top eigenvector $\ket{\psi}$.
    Let $\Delta = \lambda - \lambda_2$ denote the spectral gap of $\rho$.
    Let $\varepsilon, \delta > 0$, and let $\eta = \max(1 - \lambda, \varepsilon)$.
    Then, there is an algorithm, which, given $\Delta, \eta, \varepsilon >0$, and $N$ copies of $\rho$, where 
    \begin{align*}
        N = \tilde{O}\left( \frac{2^n}{\varepsilon}  \cdot \frac{\eta^2}{\Delta^3 \varepsilon^2} \cdot \log(1/\delta) \right) \; ,
    \end{align*}
    produces a pure state $\hatket{\psi}$ such that $\left|\hatbraket{\psi}{\psi}\right|^2 \geq 1-\varepsilon$ with probability at least $1-\delta$.
    Moreover, the algorithm performs only non-adaptive Pauli basis measurements and runs in time 
    \[
    \tilde{O}\left( \frac{2^n}{\varepsilon} \cdot \poly\left(\frac{\eta}{\Delta \varepsilon}  \right) \cdot \log(1/\delta) \right) \; .
    \]
\end{theorem}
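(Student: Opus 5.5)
We will prove the theorem with a recursive dimension-halving scheme guided by the abstract's hint: build a nested sequence of low-dimensional subspaces, each of which approximately contains $\ket{\psi}$.

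\textbf{Setup.} Write $\rho = \lambda \ketbra{\psi} + \sigma$, where $\sigma \perp \ket{\psi}$, $\|\sigma\| \le \lambda_2$ and $\operatorname{tr}\sigma = 1-\lambda \le \eta$.

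\textbf{Recursion.} Split the $n$ qubits into a prefix of $k$ qubits and a suffix of $n-k$ qubits. At level $k$ we maintain a subspace $V_k \subseteq (\C^2)^{\otimes k}$ of dimension $r = \poly(\eta/\Delta\varepsilon)$. The invariant is that $\Pi_{V_k} \otimes I$ nearly preserves $\ket{\psi}$:
\[
\|(\Pi_{V_k}\otimes I)\ket{\psi}\|^2 \ge 1 - \varepsilon_k .
\]
To go from level $k$ to level $k+1$, we form the candidate space $V_k \otimes \C^2$ of dimension $2r$. We then estimate the compressed operator
\[
M_{k+1} = \operatorname{tr}_{\text{suffix}}\bigl[(\Pi\otimes I)\rho(\Pi\otimes I)\bigr],
\]
with $\Pi = \Pi_{V_k\otimes\C^2}$, and let $V_{k+1}$ be the span of the eigenvectors of $M_{k+1}$ with eigenvalue above a threshold $\tau \approx \Delta\varepsilon/\poly\log$.

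The compressed operator $M_{k+1}$ has trace at most $1$. Moreover, $\ket{\psi}$'s reduced state on the prefix contributes $\lambda\cdot\operatorname{tr}_{\text{suffix}}\ketbra{\psi}$, whose mass outside the top eigenvectors is controlled. So thresholding keeps at most $1/\tau$ directions and loses at most $O(\tau r/\lambda)$ of $\ket{\psi}$'s weight. The noise $\sigma$ can add spurious directions but cannot remove $\psi$-mass. This is where $\eta$ enters: the number of spurious directions above threshold is at most $\eta/\tau$.

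\textbf{Final step.} At $k=n$ we have an $r$-dimensional subspace $V_n$ of the full space, with $\|\Pi_{V_n}\ket{\psi}\|^2 \ge 1-\varepsilon/2$. We estimate $\Pi_{V_n}\rho\Pi_{V_n}$ to operator norm error $\Delta\varepsilon/4$ and apply Davis--Kahan with gap $\Delta - O(\varepsilon)$. This yields the output $\hatket{\psi}$.

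\textbf{Estimation.} Each $M_{k+1}$ is estimated from one common non-adaptive pool of $N$ random Pauli-basis measurement outcomes, using classical shadows restricted to projected Paulis. For a Pauli $P = P_{\text{pre}}\otimes P_{\text{suf}}$, the contribution to the prefix operator is $\Pi_{V}P_{\text{pre}}\Pi_{V}$ weighted by the suffix outcomes. The variance of the Pauli-shadow estimator of an $r\times r$ compression is about $2^{n}\cdot \|\text{compressed }\rho\|$ per entry direction, so matrix Bernstein gives operator-norm error $\tau$ with
\[
N \approx 2^n \eta r/\tau^2
\]
copies. Plugging in $r \sim \eta/\tau$ and $\tau \sim \Delta\varepsilon$, with a union bound over the $n$ levels, gives $\tilde O(2^n\eta^2/(\Delta^3\varepsilon^3))$.

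Non-adaptivity holds because the measurements are fixed in advance and only the classical postprocessing depends on the $V_k$. This sample reuse across adaptively chosen $V_k$ has to be handled by a net argument over $r$-dimensional subspaces, costing $\poly\log$ factors. Alternatively, we can split the samples into $n$ fresh batches, costing a factor of $n$.

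\textbf{Main obstacle: runtime.} Naively, computing $\Pi_V P \Pi_V$ for every measured string costs $2^k r$ per sample per level, which is far from linear overall. The plan is to exploit the tensor structure $V_{k+1}\subseteq V_k\otimes\C^2$. We store $V_{k+1}$ through the $2r\times r$ isometry from $V_k\otimes \C^2$, so that
\[
\Pi_{V_{k+1}}(P_{\text{pre}}\otimes p)\Pi_{V_{k+1}}
\]
is obtained from $\Pi_{V_k}P_{\text{pre}}\Pi_{V_k}$ in $\poly(r)$ time. This is exactly the structured Pauli sampling the paper advertises. We sample Pauli bases so that measurement outcomes on the prefix are grouped, i.e. the randomness is organized so that the needed projected prefix Paulis form a tree of size $\tilde O(N)$ rather than $N\cdot n$. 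The batched recursion then costs $\poly(r)$ per node, giving total time $\tilde O(2^n/\varepsilon\cdot\poly(\eta/\Delta\varepsilon))$. Getting the variance bound and this batched structure to coexist, so that the samples remain unbiased and Bernstein still applies, is the step I expect to be hardest.
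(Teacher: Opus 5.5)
There is a genuine gap: your core invariant cannot be maintained with $r = \operatorname{poly}(\eta/\Delta\varepsilon)$. You keep a single $r$-dimensional subspace $V_k$ of the prefix space, require $\|(\Pi_{V_k}\otimes I)\ket{\psi}\|_2^2 \ge 1-\varepsilon_k$, and prune using the partial trace $M_{k+1}$. But $\|(\Pi_{V_k}\otimes I)\ket{\psi}\|_2^2 = \operatorname{tr}(\Pi_{V_k}\operatorname{tr}_{\mathrm{suf}}\ketbra{\psi})$. So the invariant says the top $r$ Schmidt coefficients of $\ket{\psi}$, across the cut after qubit $k$, carry almost all of its weight. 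The signal term $\lambda \operatorname{tr}_{\mathrm{suf}}\ketbra{\psi}$ is not rank one. Its rank is the Schmidt rank, up to $2^{\min(k,n-k)}$, and your sentence ``whose mass outside the top eigenvectors is controlled'' is exactly where this is assumed away.

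Concretely, take $\rho = \ketbra{\psi}$ pure (so $\eta = \varepsilon$ and $\Delta = 1$, a case the theorem covers), with $\ket{\psi}$ a product of Bell pairs joining qubit $j$ to qubit $j+n/2$. For $k \le n/2$ the prefix reduced state is $I/2^k$. Once $2^{-k} < \tau$, your threshold discards all of $\ket{\psi}$, and no $r$-dimensional $V_{n/2}$ of any kind retains more than $r/2^{n/2}$ of its weight.

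The bookkeeping also fails on its own terms. The number of kept directions is bounded only by $1/\tau$, not $\eta/\tau$, since the signal alone can place up to $\lambda/\tau$ eigenvalues above threshold. Even granting $r \sim \eta/\tau$, a per-level loss of $O(r\tau/\lambda)$ accumulates to $O(n\eta/\lambda)$ over $n$ levels. That is not $O(\varepsilon)$, because $\eta \ge \varepsilon$.

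The missing idea is to condition on the prefix in the computational basis rather than trace it out. For each $s \in \{0,1\}^\ell$, $\ket{\psi_s} = (\bra{s}\otimes I)\ket{\psi}$ is a single vector in the suffix space, so it fits in a low-dimensional subspace however entangled $\ket{\psi}$ is. The paper accordingly does the following:
\begin{itemize}
\item It keeps a separate $k$-dimensional subspace $\Pi_s$ for every prefix ($2^\ell$ of them at layer $\ell$, total size $O(2^n k)$).
\item It builds these bottom-up from $\ketbra{0}\otimes\Pi_{s0} + \ketbra{1}\otimes\Pi_{s1}$.
\item It estimates each compression $(\bra{s}\otimes\tilde{U}_s^\dagger)\rho(\ket{s}\otimes\tilde{U}_s)$ by measuring the prefix in $Z$ and the suffix in a random Pauli basis.
\item It keeps a top-$k$ eigenspace of that estimate.
\end{itemize}
Because the signal in each block is rank one, interlacing gives $\lambda_{k+1}(\hat\rho_s) \le \operatorname{tr}(\ketbra{s}\otimes I\,\nu)/k + \|\xi_s\|_\infty$. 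So the pruning loss at $s$ is about $\operatorname{tr}(\ketbra{s}\otimes I\,\nu)/(\lambda k)$ plus an estimation term. These noise masses sum to at most $1-\lambda \le \eta$ per layer, giving $n\eta/(\lambda k)$ in total and hence $k \approx n\eta/(\Delta\varepsilon)$. \cref{lem:eigenspace-truncation-quadratic} refines the estimation term to be quadratic in $\|\xi_s\|_\infty$.

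Your runtime step also needs a concrete mechanism. The paper's is to draw Paulis from affine families $\{(x, Ax+c)\}$ with $A$ symmetric. These are uniform and pairwise independent, yet all $\tilde{U}_s^\dagger\sigma_{x,Ax+c}\tilde{U}_s$ can be computed at once by a fast Walsh--Hadamard convolution.
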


\noindent Note that our full algorithm does not require exact knowledge of $\lambda$; see \cref{lem:pca-output,lem:pca-runtime}. In fact, we achieve the same guarantees with any $\Delta, \eta, \varepsilon > 0$ such that $\lambda \geq 1-\eta$ and $\lambda - \lambda_2 \geq \Delta$, with $\varepsilon \leq \eta$. We suppress this generality here for clarity of presentation.

We pause here to interpret this theorem. 
First, we are primarily interested in the regime where the spectral gap $\Delta$ is constant.
In this case, both our copy complexity and runtime are nearly linear in $d = 2^n$, which as mentioned above, is optimal. A key quantity which emerges from our theorem is the ratio $\eta / \varepsilon$.
Notice that $\eta$ is a natural measure of how close the state $\rho$ is to being pure, so one can interpret it as a measure of the signal to noise ratio.
Thus, the regime where $\eta / \varepsilon = \Theta (1)$ corresponds to the regime where our target accuracy is comparable to the mass of noise in the system. 
In this regime, observe that our theorem actually yields runtime and copy complexity that are $\tilde{O} (2^n / \varepsilon)$, matching the best possible rate.

In particular, in the special case where $\lambda = 1$, i.e. when the state $\rho$ is pure, we have that  $\eta / \varepsilon = \Theta (1)$.
Therefore, our result recovers both the copy and runtime complexity of~\cite{grewal2026nearly} for pure state tomography, and so our result is a strict generalization of their result to the more general mixed state setting.
Interestingly, as we describe now, while our algorithm shares a similar initial framework to the algorithm of~\cite{grewal2026nearly}, our techniques end up being quite different from theirs, even for the setting where $\rho$ is pure, and so our algorithm  yields a different route to achieving pure state tomography, which may be of independent interest.

\subsection{Our Techniques}\label{sec:overview}
At a high level, our algorithm proceeds qubit-wise.
For any prefix string $s \in \{0, 1\}^\ell$, let $\rho_s$ denote the unnormalized state obtained by measuring the first $\ell$ qubits of $\rho$ in the coordinate basis, and receiving outcome $s$, and similarly define $\ket{\psi_s}$.
Like in~\cite{grewal2026nearly},  we will reconstruct $\ket{\psi_s}$ ``backwards'': for some prefix $s \in \{0, 1\}^\ell$, for some $\ell < n$, we will seek to recover information about $\ket{\psi_s}$ based on approximate information recovered about the states with prefixes $s0$ and $s1$, i.e. the 1-bit extensions of $s$.
In~\cite{grewal2026nearly}, their algorithm seeks to directly reconstruct $\ket{\psi_s}$ using $\ket{\psi_{s0}}$ and $\ket{\psi_{s1}}$.
However, to do so, their algorithm crucially leverages the purity of $\rho$.
This  is because when $\rho$ is pure, it holds that $\rho_s = \ketbra{\psi_s}$ for all $s$. Thus, to recover $\ket{\psi_s}$, it suffices to recursively solve the same quantum PCA problem on a smaller scale.
However, this is no longer true when $\rho$ is mixed: in fact, after some basis measurements, the resulting post-measurement state can have a top eigenvector that is completely orthogonal to $\ket{\psi_s}$.
In particular, even if we are given the exact top eigenvectors of $\rho_{s0}$ and $\rho_{s1}$, we cannot in general hope to obtain a good approximation to $\ket{\psi_s}$.

Instead, our algorithm seeks to maintain, at every node $s$, a low-dimensional subspace $\Pi_s$ which approximately contains $\ket{\psi_s}$.
Our main algorithmic task is then to, given two subspaces $\Pi_{s0}$ and $\Pi_{s1}$, merge them into a single subspace $\Pi_s$ while maintaining good error and without blowing up the dimensionality of the subspace.
We do so in two steps.
First, consider the naively merged subspace with projector given by
\begin{align*}
    \tilde{\Pi}_s = \ketbra{0} \otimes \Pi_{s0} + \ketbra{1} \otimes \Pi_{s1}.
\end{align*}
It is straightforward to show that this projector would satisfy the necessary error bounds, but the dimension of the projector would also double, which would be prohibitive.
To address this issue, we interleave several classical $k$-PCA steps within each layer. 
Concretely, before the dimension of the naively constructed projections $\tilde{\Pi}_s$ grows too large, we use them, with appropriately chosen Pauli measurements, to build an estimate of $\rho_s$:
\begin{align*}
    \hat\rho_s \approx (\bra{s} \otimes \tilde{\Pi}_s) \rho (\ket{s} \otimes \tilde{\Pi}_s), \quad s \in \{0,1\}^\ell.
\end{align*} 
We then use these classical descriptions $\{\hat\rho_s\}_s$ to prune each projection $\tilde{\Pi}_s$ into a $k$-dimensional one $\Pi_s$, which we pass up to the next layer. Crucial to our nearly linear complexity guarantees is the observation that, in order to maintain most of the target state's energy, even amid noise in our estimate $\hat\rho_s$, $k$ need only be logarithmic in the dimension of $\ket\psi$ to give global error guarantees. 
Indeed, at any prefix, one can show that 
\begin{align*}
    \|(I-\Pi_s)\tilde{\Pi}_s \ket{\psi_s}\|_2^2 \lesssim \frac{\tr(\ketbra{s} \otimes I \nu)}{\lambda k} + \frac{\|\xi_s\|_\infty}{\lambda}.
\end{align*}
Summing over all prefixes at layer $\ell$, then over all layers, one may send the first term to $\varepsilon$ with $k = O(n)$, ignoring spectral terms. In this sense, our approach is intended to avoid prefix-wise accuracy guarantees which could propagate through the algorithm. To optimize our $\varepsilon$-dependence, we ultimately use a different energy inequality which allows us quadratic dependence on the perturbation operator norm, with a manageable tradeoff in the noise term; see~\Cref{lem:eigenspace-truncation-quadratic} for more details.

\subsubsection{Affine Pauli Sampling}

There is a final algorithmic challenge to implement this framework optimally.
With layer-$\ell$ projections $\{\tilde{\Pi}_s\}_s$, it's not immediately clear how to efficiently estimate all the unnormalized block estimates $\{\hat\rho_s\}_s$ using Pauli measurements. 
While it would be convenient to work directly with a basis for $ \Pi_s$, one notes that the projected Pauli matrices $\Pi_s \sigma_{x,z}\Pi_s$ also provide a frame for the subspace along which we need estimates. \begin{align*}
    (\bra{s} \otimes \Pi_s) \rho (\ket{s} \otimes \Pi_s) = \frac{1}{2^{n-\ell}} \sum_{x,z} \tr(\ketbra{s} \otimes \sigma_{x,z} \rho) \Pi_s \sigma_{x,z}\Pi_s.
\end{align*}
Hence, we can construct unbiased estimators for the projections of $\rho$ by post-selecting on prefix strings, measuring in a random Pauli basis, and using this measurement to sign the matrix $\Pi_s \sigma_{x,z} \Pi_s$. This projected Pauli frame is perhaps highly redundant. Importantly, however, it is only $O(k)$-dimensional. Indeed, we can choose these estimates to have vanishing variance with only $\tilde{O}(2^n k \cdot  \tr(\ketbra{s} \otimes I \rho))$ copies, so
$\tilde{O}(k 2^n)$ copies suffice to control all prefix estimates simultaneously.

If we want to make this algorithm efficient, a major issue remains: we cannot ever write down the full Pauli matrices explicitly. On one hand, the projected matrices $\Pi_s \sigma_{x,z} \Pi_s$ may be represented using only $O(k^2)$ complex numbers. In view of the Pauli matrices as phased permutations, it's also easy to check that each such matrix may be computed in $O(2^n k^2)$ time (at least in some basis for $ \tilde{\Pi}_s$). \begin{align*}
    \sigma_{x, z} \ket{y} = i^{x\cdot z} (-1)^{z \cdot y} \ket{y + x}, \quad x, y, z \in \F_2^n.
\end{align*} 
The issue, however, is that \emph{(i)} our subspaces $ \Pi_s$ (and hence the projected Paulis) are adaptively constructed, and \emph{(ii)} we need to access roughly $2^n$ of these matrices per layer, even if $k$ were kept constant. To fix this, we dispense with the natural uniform joint distribution over Pauli matrices. Instead, we sample from a more structured one which allows us to both simulate a uniform draw of $\approx 2^n$ Pauli matrices and compute all of the projected matrices at once. 

It turns out that such a structured distribution comes precisely from the commutation structure of the Pauli matrices. To see this, it is helpful to view our sample of Pauli matrices as a random subset of their phase space $F \subset \F_2^n \oplus \F_2^n$. And to simplify the situation, suppose we just have fixed unit vectors $\ket{u},\ket{v}$ and want to compute the collection of forms \begin{align*}
    \bra{u} \sigma_{x,z} \ket{v}, \quad (x,z) \in F.
\end{align*}
A first observation is that along some affine subspaces of $\F_2^n \oplus \F_2^n$, this mass computation can be made efficient.
Along the centered horizontal $H = \{(x, 0) : x \in \F_2^n\}$, for instance, one notices that the Paulis consist of translations on the computational basis. The forms of interest are therefore just evaluations of a certain convolution between $\ket{u}$ and $\ket{v}$: 
\begin{align*}
    \bra{u}\sigma_{x,0} \ket{v} = \sum_{y} \braket{u}{y+x} \braket{y}{v}.
\end{align*}
In fact, one can observe that by the Pauli commutation relations, any unitarily equivalent family must implement translation along some basis. Because the Paulis do so along the computational basis, it is an easy application of the fast Walsh-Hadamard transform algorithm to compute this form for all $x$ in $O(2^n \cdot n)$ time. 

Our basic strategy is to select $F$ as a random affine space along which this pre-computation may be done efficiently. Reminiscent of a basic hashing technique, a first attempt is to randomly tilt $H$ along the $Z$-axis, then shift. \begin{align*}
    F = \{(x, Ax + c)\}_{x \in \F_2^n}, \quad A \sim \Unif \F_2^{n \times n} , \, c \sim \Unif \F_2^n.
\end{align*}
First, it's not hard to see that the marginals $(x, Ax + c)$, $x \in \F_2^n$, are uniform over $\{x\} \times \F_2^n$ and pairwise independent. Second, because the chosen Paulis are precisely an affine transformation of $H$, we might hope to take advantage of its deterministic convolution structure. Fixing $x$, we expand one of the now-shifted Pauli matrices. \begin{align*}
    \sigma_{x, A x + c} &= \sum_y i^{x \cdot (A x + c)}(-1)^{y \cdot (A x + c)} \ketbra{y+x}{y}.
\end{align*}
The second observation is that if $A$ is forced to be symmetric, while the same two distributional properties remain, our affine space still admits an amenable decomposition; along $F$, the Pauli matrices now factor into a product of $\sigma_{x,0}$ with diagonals. Completing the square in the expression above, \begin{align*}
    \sigma_{x, Ax + c} &= i^{x \cdot (Ax+c) - x\cdot Ax} \begin{pmatrix}
        \sddots &&\\
        & i^{y \cdot A y}&\\
        &&\sddots 
    \end{pmatrix}\sigma_{x, 0} \begin{pmatrix}
        \sddots &&\\
        & (-i)^{y \cdot Ay}(-1)^{y \cdot c}&\\
        &&\sddots 
    \end{pmatrix}.
\end{align*}
Computing the forms of interest then reduces to a fast entry-wise pre-processing on $\ket{u}, \ket{v}$, followed by the classic Fourier procedure.

With pairwise independence, one again finds that $O(2^n)$ copies suffice for a layer, and this extends easily to the setting of general $k$ by considering $O(k^2)$ inner products of the form above. To mitigate copy complexity blowup in factors of $k$, we apply standard matrix estimation techniques; details are left to \cref{sec:subspace-tomography}.

\subsection{Related Work}

The task of quantum PCA, and more generally, the task of finding a good low-rank approximation to a given state, has a long history in physics and quantum computing.
Perhaps the first non-asymptotic guarantees for this problem were derived in~\cite{cirac1999optimal}, which showed optimal rates for purification of a single-qubit system. 
Several subsequent works~\cite{gross2010compressed,liu2011universal,flammia2012compressed,alquier2013rank,kueng2017low} derived estimators with non-asymptotic guarantees for many-qubit systems, several of which demonstrated very powerful connections between this task and the classical task of compressed sensing.
The first algorithms with nearly-optimal dimension dependence for this problem were derived in a sequence of works~\cite{haah2016sample,odonnell2016efficient,odonnell2017efficient}, culminating in the nearly-optimal rate in all parameters achieved in~\cite{odonnell2017efficient}.
Many works (including several of the aforementioned works) also obtain results for the related problem of tomography for low rank states, and in particular, for rank-1 states, see e.g.~\cite{hayashi1998asymptotic,haah2016sample,odonnell2016efficient,guta2020fast,van2023quantum,pelecanos2025mixed,scharnhorst2025optimal}.
One particularly relevant paper is~\cite{grier2024principal}, which specifically studies (among other questions) the task of finding the top eigenvector of a mixed state under a similar gapped assumption, but with general measurements.

In addition, there has been a parallel line of work developing algorithms for many learning tasks (including the one considered here), using restricted classes of measurements, such as single-copy Pauli measurements, including the aforementioned work of~\cite{grewal2026nearly}.
Several of the aforementioned papers use simple measurements such as Pauli measurements~\cite{gross2010compressed,liu2011universal,flammia2012compressed}, and subsequent work of~\cite{guta2020fast} refined these bounds.
A recent line of work has also studied the complexity of general state tomography with Pauli measurements which demonstrates that for full-rank states, Pauli basis measurements are provably worse than general single-copy measurements~\cite{acharya2025arenear,acharya2025arenot,yu2020sample}.
Other works have also proven lower bounds for these tasks with \emph{Pauli observable measurements}, a weaker class of measurements than general Pauli measurements~\cite{flammia2012compressed,lowe2025lower}, which demonstrate that such measurements cannot in general hope to obtain linear rates in the dimension.
Finally, we also mention that there is a large line of work on state certification and other quantum testing tasks using Pauli measurements or other classes of restricted measurements, see \emph{e.g.}~\cite{flammia2011direct,da2011practical,hayashi2015verifiable,pallister2018optimal,dangniam2020optimal,li2023robust,sater2025efficient,huang2025certifying,gupta2026few,li2026universal,coladangelo2026power,coladangelo2026robust}.
These papers do not seem to have much direct connection to the problem we study and the techniques we develop in our paper.

\section{Preliminaries}

Throughout, we take $i = \sqrt{-1}$; for clarity, the symbol $i$ is occasionally used as an index, though this is clear from context. For convenience throughout, we denote $\{0,1\}^0 = \{\emptyset\}$. For a matrix $\rho$, we let $\|\rho\|_p$, $p \in [1, \infty]$ denote the Schatten-$p$ norm. All logarithms are to base $e$, unless otherwise specified. For clarity of presentation, we omit numerical details regarding standard spectral computations in the real RAM model; one can handle such details deterministically, for instance via \cite{sobczyk2025deterministic}.

We let $\F_2$ denote the finite field on two elements, and $\on{Sym}_n(\F_2)$ the family of $n \times n$ self-adjoint linear maps over $\F_2^n$. We shall also make use of the Fourier transform over $\F_2^n$. For convenience, however, we will work with vector-valued maps. For $f, g : \F_2^n \to \C^k$ (or $\C^{k\times k}$), we take 
\begin{align*}
    \CF f(x) := \hat{f}(x) := \underset{y \sim \F_2^n}{\E} f(y)(-1)^{x \cdot y}.
\end{align*}
We also make use of the outer-product convolution \begin{align*}
    (f * g^\dagger) : \F_2^n \to \C^{k \times k}, \quad (f * g^\dagger)(x) = \underset{y \sim \F_2^n}{\E} f(y) g(x+y)^\dagger.
\end{align*}
In this form, the usual convolution identity may be written \begin{align*}
    \CF(f * g^\dagger)(x) = \hat{f}(x) \hat{g}(x)^\dagger.
\end{align*}

The $n$-site Pauli matrices consist of all $n$-fold tensor products of $X, Y, Z, I$. \begin{align*}
    X = \begin{pmatrix}
        &1\\
        1&
    \end{pmatrix}, \quad Y = \begin{pmatrix}
        &-i\\
        i&
    \end{pmatrix}, \quad Z = \begin{pmatrix}
        1&\\
        &-1
    \end{pmatrix}.
\end{align*}
Of particular interest is the family of qubit-wise unentangled measurements obtained by taking the 
tensor product of single-qubit Pauli eigenstates. For a Pauli $\sigma = \otimes_j \sigma_j$ with all non-identity components, the resulting rank-1 spectral projections may be written: \begin{align*}
    \bigotimes_{j=1}^n \left(\frac{I + (-1)^{y_j} \sigma_{j}}{2}\right), \quad y \in \F_2^n.
\end{align*}
We may alternatively parameterize the Pauli matrices by \begin{align*}
    \sigma_{x, z} := i^{x \cdot z} \bigotimes_{j=1}^n X^{x_j} Z^{z_j}, \quad x, z \in \F_2^n.
\end{align*}
Here, in the exponent of $i$, the dot product $x \cdot z$ is taken in $\C^n$ (or mod 4). In this form, one may view the Pauli matrices as signed permutations; a useful computational fact is the following. \begin{align}\label{eq:pauli-signed-permutation}
    \sigma_{x, z} \ket{y} = i^{x\cdot z} (-1)^{z \cdot y} \ket{y + x}.
\end{align}

\section{Fast Subspace Tomography with Pauli Measurements}\label{sec:subspace-tomography}

As a first ingredient, we require a subroutine which learns a small compression of $\rho$ along a known subspace. At the most basic level, we use only completeness of the Pauli matrices. Working directly with many randomly drawn Paulis can be expensive, however, and we access them along a low-dimensional subspace that evolves depending on the hidden state. To make our computations efficient, we leverage the $\F_2^n$ parameterization to sample certain well-structured Pauli matrices that can be efficiently constructed on the fly. We develop these tools concretely below.

\subsection{Lepskii's Method}

We find we can optimize $\varepsilon$-dependence by learning the matrices in operator norm. Here, we describe the estimator in generality before instantiating it in the subspace tomography algorithm. We require a Bernstein bound \cite{tropp2012user}.

\begin{fact}[Matrix Bernstein]\label{fact:bernstein}
    Consider a finite sequence $\{X_{j}\}$ of independent, Hermitian random matrices of dimension $d$. Assume that  each random matrix satisfies \begin{align*}
        \E X_{j} = 0, \quad  \|X_{j}\|_\infty \leq B \quad a.s., \quad\alpha^2 := \Big\| \sum_j \E X_{j}^2 \Big\|_\infty.
    \end{align*}
    Then for any $\delta > 0$, \begin{align*}
        \|\sum_j X_{j} \|_\infty \leq \frac{2}{3} B \log(d/\delta) + \sqrt{2\alpha^2 \log(2d/\delta)} \quad w.p. \,\geq 1-\delta.
    \end{align*}
\end{fact}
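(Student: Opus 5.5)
This is the standard matrix Bernstein inequality of Tropp, cited as a fact, so the plan is to reproduce the usual Laplace-transform (matrix Chernoff) argument rather than derive anything new. Write $S = \sum_j X_j$. Since $\|S\|_\infty = \max(\lambda_{\max}(S), \lambda_{\max}(-S))$ and $\{-X_j\}$ satisfies the same hypotheses, it suffices to bound $\Pr[\lambda_{\max}(S) \geq t]$ by $d \exp\!\big(-\tfrac{t^2/2}{\alpha^2 + Bt/3}\big)$ and then take a union bound over the two signs. The union bound is where the $2d$ inside the logarithm comes from.

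The key steps are as follows. \emph{(1)} Markov's inequality applied to $e^{\theta \lambda_{\max}(S)} \le \tr e^{\theta S}$ gives $\Pr[\lambda_{\max}(S)\ge t] \le e^{-\theta t}\,\E \tr e^{\theta S}$ for every $\theta>0$. \emph{(2)} Use Lieb's concavity theorem, i.e.\ the concavity of $A \mapsto \tr\exp(H + \log A)$ on positive definite matrices, together with independence, iterating Jensen's inequality one summand at a time. This yields
\begin{align*}
\E \tr e^{\theta S} \le \tr \exp\Big(\sum_j \log \E e^{\theta X_j}\Big).
\end{align*}
\emph{(3)} Bound each matrix cumulant generating function. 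The scalar inequality $e^{\theta x} \le 1 + \theta x + g(\theta)x^2$ holds for $x \le B$, where $g(\theta) = (e^{\theta B} - \theta B - 1)/B^2$. By the transfer rule for functions of Hermitian matrices, together with $\E X_j = 0$, this gives $\E e^{\theta X_j} \preceq I + g(\theta)\E X_j^2$. Operator monotonicity of $\log$ and $\log(1+u)\le u$ then give $\log \E e^{\theta X_j} \preceq g(\theta) \E X_j^2$. \emph{(4)} The map $\tr\exp$ is monotone in the Loewner order, so
\begin{align*}
\E\tr e^{\theta S} \le \tr\exp\Big(g(\theta)\sum_j \E X_j^2\Big) \le d\, e^{g(\theta)\alpha^2}.
\end{align*}
Using $g(\theta) \le \tfrac{\theta^2/2}{1-\theta B/3}$ for $\theta < 3/B$ and choosing $\theta = t/(\alpha^2 + Bt/3)$ gives the stated Bernstein tail.

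\emph{(5)} Invert the tail. Set $2d\exp\!\big(-\tfrac{t^2/2}{\alpha^2+Bt/3}\big) = \delta$ and solve the resulting quadratic in $t$. With $L = \log(2d/\delta)$, the positive root is bounded by
\begin{align*}
t \le \tfrac{B}{3}L + \sqrt{\tfrac{B^2}{9}L^2 + 2\alpha^2 L} \le \tfrac{2}{3}BL + \sqrt{2\alpha^2 L},
\end{align*}
using $\sqrt{a+b}\le\sqrt a+\sqrt b$. The statement writes $\log(d/\delta)$ in the first term and $\log(2d/\delta)$ in the second. Matching the first term exactly needs mild bookkeeping, either by absorbing the constant $2$ or by a slightly sharper inversion. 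If neither works, one simply states the bound with $\log(2d/\delta)$ in both terms, which is immaterial to every use of this fact in the paper.

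The only genuinely nontrivial ingredient is step (2), Lieb's theorem, which replaces the failing scalar identity $e^{a+b}=e^ae^b$ for noncommuting matrices. I would invoke it as a black box from \cite{tropp2012user} rather than reprove it. Everything else is scalar calculus transferred to matrices through spectral calculus and operator monotonicity.
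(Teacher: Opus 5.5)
Your argument is the standard Laplace-transform/Lieb proof from \cite{tropp2012user}, which is all the paper relies on (it states this as a cited fact with no proof). Steps (1)--(4) are correct as written.

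The one thing to fix is step (5). Neither remedy you propose can produce $\log(d/\delta)$ in the first term. ``Absorbing the constant 2'' is not available, because the statement is an exact inequality rather than a $\lesssim$ bound. A ``slightly sharper inversion'' also fails, because the tail $2d\exp\bigl(-\tfrac{t^2/2}{\alpha^2+Bt/3}\bigr)$ you derived is simply not small enough at the literal threshold.

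To see this, write $L=\log(2d/\delta)$, $L'=\log(d/\delta)=L-\log 2$, $c=\tfrac23 B$, $b=\sqrt{2\alpha^2 L}$, and $t_0=cL'+b$. The tail is at most $\delta$ at $t_0$ iff $t_0^2-cLt_0-b^2\ge 0$, and a direct expansion gives
\begin{align*}
t_0^2-cLt_0-b^2 = c\bigl(b\,(L'-\log 2)-cL'\log 2\bigr).
\end{align*}
This is negative whenever $b(L'-\log 2)<cL'\log 2$, and in particular for every sufficiently small $\alpha>0$. At $\alpha=0$ the exact root is $\tfrac23 B\log(2d/\delta)$. So the Bernstein tail yields only
\begin{align*}
\tfrac23 B\log(2d/\delta)+\sqrt{2\alpha^2\log(2d/\delta)}.
\end{align*}
The literal first term is not a consequence of this tail, and it cannot be recovered by bookkeeping.

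Your fallback of stating the bound with $\log(2d/\delta)$ in both terms is the right call. Drop the claim that bookkeeping fixes the mismatch, which is most likely a transcription slip in the paper's statement. The fallback is also exactly the form the paper uses: in the proof of \cref{lem:Lepskii}, both the $B$-term and the variance term carry the same logarithm $\log(4dI/\delta)$. Nothing downstream depends on the smaller logarithm.
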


To control failure probabilities, we clip the spectrum of our empirical estimates to balance bias and variance appropriately. The ideal clipping depends on a possibly unknown variance, but we can use a standard adaptive technique to choose from a list of candidate clippings \cite{minsker2018annals,lepskii1992adaptive}. We clip the spectra of these matrices at several thresholds, then take empirical means.  
We now state the guarantee needed for our subspace tomography procedure (\cref{alg:subspace-tomography}). 

\begin{lemma}\label{lem:Lepskii}
    Consider an i.i.d. sequence $\{X_{j}\}_{j}$ of Hermitian matrices of dimension $d$, such that $\| \E X_{j}^2 \|_\infty \leq \beta^2$.
    Given access to scalars $\beta_{\on{max}}^2 \geq 8\beta_{min}^2 > 0$, \cref{alg:Lepskii} computes an estimate $\hat X$ using ${O(M \log(\beta_{max}^2 / \beta_{min}^2))}$ examples, and $O(M \poly(d) \log(\beta_{max}^2/\beta_{min}^2)^2)$ time, such that if $\beta_{max}^2 \geq \beta^2 \geq \beta_{min}^2$, then with probability at least $1-\delta$, \begin{align*}
        \|\hat X -\E X_{j}\|_\infty \lesssim \sqrt{\frac{\beta^2 \log(\beta_{max}^2/\beta_{min}^2 \cdot d/\delta)}{M}}.
    \end{align*}
\end{lemma}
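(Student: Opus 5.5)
We prove the lemma by combining a clipped matrix Bernstein bound, applied at each threshold on a geometric grid, with a Lepskii-type selection among the resulting estimates. Let $\mu = \E X_j$. Set $L = \lceil \log_2(\beta_{max}^2/\beta_{min}^2)\rceil$ and candidate variance levels $\beta_t^2 = 2^t \beta_{min}^2$ for $t = 0,\dots,L$. For each $t$ we use a fresh batch of $M$ samples, for $O(ML)$ samples in total. On that batch we form the clipped empirical mean
\[
\hat X_t = \frac{1}{M}\sum_{j} \clip_{\tau_t}(X_j),
\]
where $\clip_\tau$ truncates each eigenvalue to $[-\tau,\tau]$ and the threshold is $\tau_t = \sqrt{\beta_t^2 M/\log(dL/\delta)}$. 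Computing $\hat X_t$ requires one eigendecomposition per sample, so all candidates together cost $O(ML\poly(d))$ time. The pairwise comparisons in the selection step cost $O(L^2\poly(d))$, which is within the stated bound. We define the radius $r_t = C\sqrt{\beta_t^2 \log(dL/\delta)/M}$ and output $\hat X = \hat X_{\hat t}$, where $\hat t$ is the smallest $t$ such that $\|\hat X_t - \hat X_{s}\|_\infty \le 2 r_{s}$ for all $s \ge t$.

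\textbf{Step 1: single-threshold bound.} Fix $t$ with $\beta_t^2 \ge \beta^2$. We split the error as
\[
\hat X_t - \mu = \bigl(\hat X_t - \E\clip_{\tau_t}(X)\bigr) + \bigl(\E\clip_{\tau_t}(X) - \mu\bigr).
\]
The first term is a centered sum of independent matrices, each bounded by $2\tau_t/M$ in operator norm. The map $\clip$ is applied spectrally and satisfies $|\clip_\tau(\lambda)| \le |\lambda|$ eigenvalue-wise, so $\clip_\tau(X)^2 \preceq X^2$. The variance proxy is therefore at most $\beta^2/M$, and \cref{fact:bernstein} bounds the first term by $O(\tau_t\log/M + \sqrt{\beta^2\log/M})$, which is $O(r_t)$ by the choice of $\tau_t$.

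\textbf{Step 2: bias.} The bias is $\E[X - \clip_\tau(X)]$. Writing $X = \sum_i \lambda_i P_i$, we have $X - \clip_\tau(X) = \sum_i(\lambda_i - \mathrm{sgn}(\lambda_i)\tau)\ind_{|\lambda_i|>\tau}P_i$. Using $|\lambda|\ind_{|\lambda|>\tau} \le \lambda^2/\tau$ eigenvalue-wise, this difference is sandwiched by $\pm X^2/\tau$ in the Loewner order. Hence the bias has norm at most $\beta^2/\tau_t \le \sqrt{\beta_t^2\log/M} = O(r_t)$.

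\textbf{Step 3: Lepskii selection.} We take a union bound over all $t$, which is where the $\log(L\,d/\delta)$ factor comes from. With probability $1-\delta$, every $t \ge t^*$ satisfies $\|\hat X_t - \mu\|_\infty \le r_t$, where $t^*$ is the smallest index with $\beta_{t^*}^2 \ge \beta^2$, so that $\beta_{t^*}^2 \le 2\beta^2$. (The assumption $\beta_{max}^2 \ge 8\beta_{min}^2$ only guarantees that the grid is nontrivial.) By the triangle inequality, $t^*$ passes the test, since $r_t$ is increasing in $t$; hence $\hat t \le t^*$. Moreover, $\hat t$ passed the test against $s = t^*$, so
\[
\|\hat X - \mu\|_\infty \le 2r_{t^*} + r_{t^*} = O\bigl(\sqrt{\beta^2\log(dL/\delta)/M}\bigr).
\]

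The main obstacle is Step 2: the bias bound needs an operator-norm, matrix-valued Markov-type inequality, which the Loewner sandwich provides. The second point to check is that clipping does not increase the variance proxy, as used in Step 1. Both hold because clipping is applied spectrally to each Hermitian sample, so all comparisons reduce to scalar inequalities on the eigenvalues of each $X_j$.
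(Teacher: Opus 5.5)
Your proposal is correct and follows essentially the same argument as the paper. Both proofs apply spectral clipping on a geometric grid of thresholds, bound the fluctuation by matrix Bernstein using $\clip_\tau(X)^2 \preceq X^2$, and control the bias through the Loewner sandwich $-X^2/\tau \preceq X - \clip_\tau(X) \preceq X^2/\tau$; they then take a union bound over thresholds, select the smallest index consistent with all larger ones, and finish with the triangle inequality at $t^*$, where $\beta_{t^*}^2 \le 2\beta^2$. The differences are cosmetic: constants, and your more explicit accounting of sample count and runtime.
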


\noindent We remark that one can actually replace $\beta_{max}^2/\beta_{min}^2$ with a logarithm in this quantity, up to constants in the estimate. For our purposes, the given claim will suffice. 

\begin{algorithm}[ht!]
\caption{Lepskii's Method}
\label{alg:Lepskii}

\setcounter{AlgoLine}{0}
\LinesNumbered
\everypar={\nl}

\KwIn{Parameters $d, \beta_{min}^2,\beta_{max}^2 (\geq 8\beta_{min}^2),  M$}
\KwIn{Sequence of \emph{i.i.d.} Hermitian matrices $X_j$ of dimension $d \times d$}

Set $I = \lceil \log_2(\beta_{max}^2/\beta_{min}^2)\rceil$

Organize $(I+1)M$ examples $\{X_{i,m}\}$ into $(I+1)$ blocks of $M$

\For{$i = 0, \ldots, I$}{
    Set $\beta_i^2 = 2^i \beta_{min}^2$

    Set $\mu_i = \sqrt{\frac{M\beta_i^2}{\log(4dI/\delta)}}$

    Record $\hat X_i = \frac{1}{M} \sum_m \clip_{\mu_i}(X_{i, m})$
}

Take the minimum $i_0 \in \{0, \ldots, I\}$ such that \begin{align*}
    \|\hat{X}_{i_0} - \hat{X}_{i}\|_\infty \leq 8\sqrt{\frac{\beta_i^2 \log(4dI/\delta)}{M}} \quad \forall i > i_0
\end{align*}

If condition is never satisfied, set $i_0 = 0$

Return $\hat X := \hat X_{i_0}$
\end{algorithm}

\noindent Here, we have taken the spectral extension of the scalar map \begin{align*}
    \clip_\mu(x) = \min\{\mu, \max\{x, -\mu\}\}, \quad x \in \R, \mu > 0.
\end{align*}
One should view $M$ as a variance-scaling parameter. We average over $M$ clipped estimates to achieve a desired scaling of errors.

\begin{proof}
    First note that for any indices $i,m$, \begin{align*}
        -\frac{X_{i,m}^2}{\mu_i} \preceq X_{i,m} - \clip_{\mu_i}(X_{i,m}) \preceq \frac{X_{i,m}^2}{\mu_i}.
    \end{align*}
    Moreover, note that for each $i$, \begin{align*}
        \|\E \sum_{m} (\clip_{\mu_i}(X_{i,m})- \E\clip_{\mu_i}(X_{i,m}))^2 \|_\infty \leq M \|\E \clip_{\mu_i}(X_{i,m})^2 \|_\infty \leq M\beta^2.
    \end{align*}
    Using \cref{fact:bernstein}, it holds with probability at least $1-\delta$ that for all $i$, \begin{align*}
        \|\hat{X}_i - \E X_{i,m}\|_\infty &\leq \|\hat{X}_i - \E \hat{X}_i\|_\infty + \|\E \hat{X}_i - \E X_{i,m}\|_\infty\\
        &\leq \frac{4\mu_i \log(4dI/\delta)}{3M} + \sqrt{\frac{2\beta^2 \log(4dI/\delta)}{M}} + \frac{\beta^2}{\mu_i}.
    \end{align*}
    
    Conditioned on the success of this event, the definition of $\mu_i$ implies that for every $\beta_i \geq \beta$, \begin{align*}
        \|\hat{X}_i - \E X_{i,m}\|_\infty &\leq 4 \sqrt{\frac{\beta_i^2 \log(4dI/\delta)}{M}}.
    \end{align*}
    Now consider the smallest $i_*$ such that $\beta_{i_*}^2 \geq \beta^2$, which must exist if $\beta^2 \leq \beta_{max}^2$. Note that for all $i \geq i_*$, a triangle inequality gives \begin{align*}
        \|\hat{X}_i - \hat{X}_{i_*}\|_\infty &\leq 8 \sqrt{\frac{\beta_i^2 \log(4dI/\delta)}{M}}.
    \end{align*}
    Hence, conditioned on success, $i_0 \leq i_*$, and another triangle inequality gives \begin{align*}
        \|\hat{X}_{i_0} - \E X_{i,m}\|_\infty \leq 16 \sqrt{\frac{\beta_{i_*}^2 \log(4dI/\delta)}{M}}.
    \end{align*}
    If also $\beta_{min}^2 \leq \beta^2$, then $\beta_{i_*}^2 = \beta_{min}^2 \leq \beta^2$ if $i_* = 0$ and $\beta_{i_*}^2 = 2\beta_{i_*-1}^2 \leq 2\beta^2$ otherwise. The claim follows.
\end{proof}

\subsection{Affine Pauli Estimators}

As described in \cref{sec:overview}, we efficiently compute our projected Pauli matrices by leveraging Fourier structure in our sampled Pauli matrices. We briefly formalize this step of the algorithm. Throughout the remainder of the section, we take $\tilde{U}_s$ as deterministic inputs to \cref{alg:projected-Pauli-precomputation,alg:subspace-tomography}. Later on, the adaptivity of these objects will be handled appropriately.

\begin{algorithm}[ht!]
\caption{Projected Pauli Pre-computation}
\label{alg:projected-Pauli-precomputation}

\setcounter{AlgoLine}{0}
\LinesNumbered
\everypar={\nl}

\KwIn{Parameters $n, \ell,k,R$}
\KwIn{Pairs $\{(A_r, c_r)\}_{r}$ of $\F_2$ symmetric matrices and vectors}
\KwIn{Partial unitaries $\{\tilde{U}_{s}\}_{s \in \{0,1\}^\ell}$ of dimension $2^{n-\ell} \times k_0$ ($k_0 \leq 2k$)}

Fix maps $u_{s}^{(r)}, v_{s}^{(r)} : \F_2^{n-\ell} \to \C^{k_0}$, for all $r,s$: 
\begin{align*}
    u_{s}^{(r)}(x) = i^{x \cdot A_r x} \tilde{U}_{s}^\dagger \ket{x}, \quad v_{s}^{(r)}(x) = i^{x \cdot A_r x} (-1)^{x\cdot c_r} \tilde{U}_{s}^\dagger \ket{x}
\end{align*}

Apply Fast Walsh-Hadamard to compute $\hat u_{s}^{(r)}, \hat v_{s}^{(r)}$, for all $r,s$

Apply Fast Walsh-Hadamard to compute for all $r,s$
\begin{align*}
    \pi_{s}^{(r)}(x) = i^{x \cdot (A_r x + c_r) - x \cdot A_r x} \cdot 2^{n-\ell} \CF^{-1}(\hat{u}_{s}^{(r)} {\hat{v}_{s}^{(r)\dagger}})(x), \quad x \in \F_2^{n-\ell}
\end{align*}

Return projected Pauli maps $\{\pi_{s}^{(r)}\}_{r,s}$

\end{algorithm}

\begin{lemma}\label{lem:projected-paulis}
    \cref{alg:projected-Pauli-precomputation} returns matrices \begin{align*}
        \pi_{s}^{(r)}(x) = \tilde{U}_{s}^\dagger \sigma_{x, A_r x + c_r} \tilde{U}_{s}, \quad s \in \{0,1\}^\ell, r \in [R].
    \end{align*}
\end{lemma}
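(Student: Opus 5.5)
The plan is a direct verification: expand both sides of the claimed identity in the computational basis and compare them term by term, with the only real work being bookkeeping of phases modulo $4$. Write $m = n-\ell$, $z = A_r x + c_r \in \F_2^m$, and drop the indices $r,s$. First I would unwind the Fourier step. With the paper's normalization, the convolution identity $\CF(f * g^\dagger) = \hat f \hat g^\dagger$ gives $\CF^{-1}(\hat u \hat v^\dagger) = u * v^\dagger$. Hence
\begin{align*}
    \pi(x) = i^{x\cdot z - x\cdot Ax}\sum_{y \in \F_2^m} u(y)\, v(x+y)^\dagger
    = i^{x\cdot z - x\cdot Ax}\sum_{y} i^{y\cdot Ay}\,(-i)^{(x+y)\cdot A(x+y)}\,(-1)^{(x+y)\cdot c}\; \tilde U^\dagger \ketbra{y}{x+y} \tilde U .
\end{align*}
On the other side, \eqref{eq:pauli-signed-permutation} gives $\sigma_{x,z} = \sum_w i^{x\cdot z}(-1)^{z\cdot w}\ketbra{w+x}{w}$. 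Reindexing by $y = w+x$, this becomes $\sigma_{x,z} = \sum_y i^{x\cdot z}(-1)^{(Ax+c)\cdot(x+y)}\ketbra{y}{x+y}$. So it suffices to match the scalar coefficients of $\ketbra{y}{x+y}$ for each fixed $y$.

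After cancelling the common factors $i^{x\cdot z}$ and $(-1)^{c\cdot(x+y)}$, the required identity is
\begin{align*}
    i^{-q(x)}\, i^{q(y)}\,(-i)^{q(x+y)} = (-1)^{x\cdot A(x+y)}, \qquad q(w) := w^\top A w \bmod 4 .
\end{align*}
The step I expect to need care, and the only place symmetry of $A$ enters, is making sense of $q$ on $\F_2^m$ and expanding $q(x+y)$. Lift $w$ to a $0/1$ integer vector and $A$ to a symmetric $0/1$ integer matrix. Changing the lift of $w$ by $2e$ changes $w^\top A w$ by $4e^\top A w + 4e^\top A e \equiv 0 \pmod 4$, using symmetry to combine the cross terms. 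So $q$ is well defined, and this is the same mod-$4$ convention already used in the phase $i^{x\cdot z}$ of $\sigma_{x,z}$. The same computation gives the polarization identity $q(x+y) \equiv q(x) + q(y) + 2\,x^\top A y \pmod 4$.

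Substituting this expansion, the left-hand side becomes
\begin{align*}
    i^{-q(x)}\,i^{q(y)}\,(-i)^{q(x)}\,(-i)^{q(y)}\,(-1)^{x^\top A y}
    = i^{-q(x)}(-i)^{q(x)}\,(-1)^{x^\top A y}
    = (-1)^{q(x)}\,(-1)^{x^\top A y},
\end{align*}
where the first equality uses $i^{q(y)}(-i)^{q(y)} = 1$ and the second uses $i^{-1}(-i) = -1$. Finally $(-1)^{q(x)} = (-1)^{x\cdot Ax}$, since only the parity of $q(x)$ matters. The left-hand side therefore equals $(-1)^{x\cdot Ax + x\cdot Ay} = (-1)^{x\cdot A(x+y)}$, as required. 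This establishes $\pi_s^{(r)}(x) = \tilde U_s^\dagger \sigma_{x,A_rx+c_r}\tilde U_s$ for every $x$, $r$ and $s$. It is the completing-the-square factorization from \cref{sec:overview}, read entrywise.
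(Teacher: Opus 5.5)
Your proof is correct and is essentially the paper's argument. Both expand $\sigma_{x,A_rx+c_r}$ via \cref{eq:pauli-signed-permutation}, apply the mod-$4$ polarization identity $q(x+y)\equiv q(x)+q(y)+2\,x^\top A y \pmod 4$ for symmetric $A$, and recognize $\tilde{U}_s^\dagger\sigma_{x,A_rx+c_r}\tilde{U}_s$, up to the phase $i^{x\cdot(A_rx+c_r)-x\cdot A_rx}$, as $2^{n-\ell}(u*v^\dagger)(x)=2^{n-\ell}\CF^{-1}(\hat u\hat v^\dagger)(x)$; you just organize it as coefficient matching rather than the paper's ``completing the square'' factorization, and you usefully check (where the paper only asserts) that $q$ is well defined mod $4$ under $0/1$ lifts, so polarization holds with $\F_2$ addition. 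One small correction to your commentary: symmetry of $A$ is also used, silently, when you rewrite $(Ax)\cdot(x+y)$ as $x\cdot A(x+y)$, so polarization is not the only place it enters.
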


\begin{proof}
    To start, note that \begin{align*}
        (x+y) \cdot A_r(x+y) &= x\cdot A_r + 2x \cdot A_r y + y \cdot A_r y \mod 4\\
        y \cdot (A_r x + c) &= y \cdot A_r x + y \cdot c \mod 2.
    \end{align*}
    Expand the Pauli matrix via \cref{eq:pauli-signed-permutation}. \begin{align*}
        \sigma_{x, A_r x + c_r} &= \sum_y i^{x \cdot (A_r x + c_r)}(-1)^{y \cdot (A_r x + c_r)} \ketbra{y+x}{y}\\
        &= i^{x \cdot (A_rx + c_r)-x\cdot Ax} \sum_y i^{(x+y)\cdot A_r(x+y)} (-i)^{y \cdot A_r y} (-1)^{y \cdot c_r} \ketbra{y+x}{y}.
    \end{align*}
    By linearity, \begin{align*}
        \tilde{U}_s^\dagger \sigma_{x, A_r x + c_r} \tilde{U}_s &= i^{x \cdot (A_r x + c_r) - x \cdot A_r x} \cdot 2^{n-\ell}(u_s^{(r)} * v_s^{(r)\dagger})(x)\\
        &= i^{x \cdot (A_r x + c_r) - x \cdot A_r x} \cdot 2^{n-\ell} \CF^{-1}(\hat{u}_s^{(r)} \hat{v}_s^{(r)\dagger})(x).
    \end{align*}
    The claim follows by definition of $\pi_{s}^{(r)}$.
\end{proof}

\begin{lemma}\label{lem:pauli-computation-runtime}
    \cref{alg:projected-Pauli-precomputation} runs in $O(2^n n^2 k^2 R)$ time.
\end{lemma}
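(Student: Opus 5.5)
The plan is to count the cost of each step of \cref{alg:projected-Pauli-precomputation} separately, for a fixed pair $(r,s)$, and then sum over all $R \cdot 2^\ell$ such pairs. Write $m = n-\ell$ and $k_0 \le 2k$. Each $\tilde U_s$ is stored explicitly as a $2^m \times k_0$ matrix, and each $A_r, c_r$ as an $\F_2$ matrix and vector. The total will be a sum of three contributions (phase preprocessing, transforms, pointwise products), each at most $O(2^n n^2 k^2 R)$.

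\emph{Step 1: forming $u_s^{(r)}, v_s^{(r)}$.} For each $x \in \F_2^m$ we need the phases $i^{x\cdot A_r x}$ and $(-1)^{x \cdot c_r}$. Computing $x \cdot A_r x \bmod 4$ (integer arithmetic, as the paper specifies) takes $O(m^2)$ time, and $x\cdot c_r$ takes $O(m)$. The row $\tilde U_s^\dagger \ket{x}$ is read off in $O(k_0)$ time, so scaling it costs $O(k_0)$. Over all $x$ this is $O(2^m(m^2 + k_0))$ per $s$. The quadratic form depends only on $r$, not on $s$, so it can be shared across $s$; without sharing we still get $O(2^\ell \cdot 2^m(m^2+k))R = O(2^n(n^2+k)R)$.

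\emph{Step 2: the transforms.} A fast Walsh--Hadamard transform of a $\C^{k_0}$-valued map on $\F_2^m$ is $k_0$ scalar transforms, costing $O(k_0\, m 2^m)$, so computing $\hat u, \hat v$ costs $O(k\, n 2^m)$. Next, the pointwise outer products $\hat u(x)\hat v(x)^\dagger$ cost $O(k_0^2)$ each, hence $O(k^2 2^m)$ in total. The inverse transform of this $\C^{k_0\times k_0}$-valued map is $k_0^2$ scalar transforms, costing $O(k^2 m 2^m)$. Finally, the phase $i^{x\cdot(A_rx+c_r) - x\cdot A_r x}$ costs $O(m^2)$ per $x$, and multiplying it into a $k_0\times k_0$ matrix costs $O(k^2)$. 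Per pair $(r,s)$ this all totals $O(2^m(m^2 + k^2 m))$.

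\emph{Step 3: summing.} Summing over the $2^\ell$ prefixes and $R$ draws gives $O(2^n R (n^2 + n k^2)) \subseteq O(2^n n^2 k^2 R)$, which is the claim. The correctness of the output is already given by \cref{lem:projected-paulis}, so only the counting is needed here.

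The only subtle point is bookkeeping. The matrix-valued inverse transform must be counted as $k_0^2$ independent scalar FWHTs, and the quadratic-form phases must be charged at $O(n^2)$ per point. I would also remark that $\CF$ is normalized as an expectation, so the $2^{m}$ rescaling factors cost only $O(1)$ per entry and do not affect the bound.
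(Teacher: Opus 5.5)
Your proposal is correct and follows the paper's proof essentially step for step: you charge $O((n-\ell)^2)$ per row for the phases, $O(k(n-\ell)2^{n-\ell})$ for the vector-valued forward Walsh--Hadamard transforms, $O(k^2(n-\ell)2^{n-\ell})$ for the entrywise inverse transforms of the $k_0\times k_0$ outer products, and then the final diagonal phase, summed over the $2^\ell R$ pairs $(r,s)$. Your accounting of that last phase step is slightly tighter than the paper's: you compute the phase once per $x$ in $O(n^2)$ and then scale the matrix in $O(k^2)$, whereas the paper charges $O((n-\ell)^2k^2)$ per point. This gives $O(2^nR(n^2+nk^2))$, which lies inside the stated $O(2^n n^2 k^2 R)$ bound.
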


\begin{proof}
    We may build each of the $2^\ell R$ maps $u_s^{(r)}, v_s^{(r)}$ as follows. Looking at each of the $2^{n-\ell}$ rows of $\tilde{U}_s$, we compute the phase factor in $O((n-\ell)^2)$ time using $A_r, c_r$, and multiply the $O(k)$ entries of this row point-wise. For each such map, the first Walsh-Hadamard transform may be done entrywise, on real and imaginary parts, for a total of $O(k(n-\ell)2^{n-\ell})$ time. These steps give at most $O(2^n n^2 k R)$ time.

    Next, the image of $\hat{u}_s^{(r)}\hat{v}_s^{(r)\dagger}$ is $k_0 \times k_0$ dimensional. Again computing each entry individually, we may compute the inverse transform of all outer products in $O(2^n R(n-\ell) k^2)$ time. Multiplying each projected Pauli by the diagonal phase factor takes $O(2^n (n-\ell)^2 k^2 R)$ time. The overall bound follows.
\end{proof}

\subsection{Subspace Tomography Algorithm}

We may now describe the full subspace tomography algorithm, which is called upon to simultaneously produce all estimates for prefixes at layer $\ell$. It executes $R$ different rounds, in each round producing matrices $\{\hat\rho^{(r)}_s\}_s$, which are then used in the Lepskii estimator to produce final layer-$\ell$ estimates $\{\hat\rho_s\}_s$. Within each round $r$, we draw a fresh affine space of pairs $\{(x, A_r x + c_r)\}_{x \in \F_2^{n-\ell}} \subset \F_2^{n-\ell} \times \F_2^{n-\ell}$, encoding the Paulis we will use to measure. Because our Paulis are not sampled uniformly, we require additional averaging within each round. 

Recall that in this section, we leave $\{\tilde{U}_s\}_s$ fixed; probabilistic estimates are computed accordingly. This will later be handled by conditioning. To handle the fact that our Lepskii estimates only provide guarantees within a prescribed variance range, we also introduce a parameter $p$, which is the minimum mass of unnormalized states for which we consider error in our matrix estimates. Below this threshold, we will incur losses which we ultimately control to sufficiently small error.

\begin{algorithm}[ht!]
\caption{Layer-$\ell$ Subspace Tomography}
\label{alg:subspace-tomography}

\setcounter{AlgoLine}{0}
\LinesNumbered
\everypar={\nl}

\KwIn{Parameters $n, \ell,k, M, p_\ell$}
\KwIn{Copies of unknown state $\rho$}
\KwIn{Partial unitaries $\{\tilde{U}_{s}\}_{s \in \{0,1\}^\ell}$ of dimension $2^{n-\ell} \times k_0$ ($k_0 \leq 2k$)}

Set variance thresholds $\beta_{min}^2 = 3k(p_\ell^2 + p_\ell/2^\ell)$, $\beta_{max}^2 = 24k$

Set $R = 2M \lceil \log_2(\beta_{max}^2/ \beta_{min}^2)\rceil$

Allocate $2^n R$ copies of $\rho$

Draw $A_r \overset{iid}{\sim} \Unif(\on{Sym}_{n-\ell}(\F_2))$ for $r\in [R]$

Draw $c_r \overset{iid}{\sim} \Unif(\F_2^{n-\ell})$ for $r\in [R]$

Use \cref{alg:projected-Pauli-precomputation} to compute projected Pauli matrices \begin{align*}
    \pi_{s}^{(r)}(x) = \tilde{U}_s^\dagger \sigma_{x, A_r x + c_r} \tilde{U}_s, \quad r\in [R], s \in \{0,1\}^\ell
\end{align*}

\For{$r =1, \ldots, R$}{

    \For{$x \in \F_2^{n-\ell}$}{

        \For{$j = 1, \ldots, 2^\ell$}{
            Draw fresh copy of $\rho$
        
            Measure first $\ell$ qubits in $Z$ basis; observe $s^{(r, x, j)} \in \{0,1\}^{\ell}$
        
            Measure final $n-\ell$ qubits in $\sigma_{x, A_r x + c_r}$ basis; observe $O^{(r, x, j)} \in \{\pm 1\}$
        }

        Record the estimates: \begin{align*}
            \hat\rho^{(r, x)}_{s} = \frac{1}{2^\ell} \sum_j \ind\{s^{(r, x, j)} = s\} \cdot O^{(r, x, j)} \pi_{s}^{(r)}(x), \quad s \in \{0,1\}^{\ell}
        \end{align*}
    }
    
    Record the round-$r$ estimates: \begin{align*}
    \hat\rho^{(r)}_{s} = \sum_x \hat\rho^{(r, x)}_{s}, \quad s \in \{0,1\}^\ell
\end{align*}
}

Run \cref{alg:Lepskii} on $\{\hat\rho_s^{(r)}\}_r$ for all $s$, with block size $M$, $\beta_{min}^2$, $\beta_{max}^2$;

Return Lepskii estimates $\{\hat\rho_s\}_s$

\end{algorithm}

\begin{lemma}
    For $s^{(r, x, j)}, O^{(r, x, j)}, A_r, c_r$ drawn as in \cref{alg:subspace-tomography}, \begin{align*}
        \E\big[ \sum_x \ind\{s^{(r, x, j)} = s\} \cdot O^{(r, x, j)} \pi_{s}^{(r)}(x) \big] &= (\bra{s} \otimes \tilde{U}_s^\dagger) \rho (\ket{s} \otimes \tilde{U}_s).
    \end{align*}
\end{lemma}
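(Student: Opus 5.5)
The plan is to compute the expectation in stages: first over the measurement outcomes for fixed $(A_r, c_r)$, then over the random affine space. We use the tower property, conditioning on $(A_r, c_r)$, so that $\pi_s^{(r)}(x) = \tilde{U}_s^\dagger \sigma_{x, A_r x + c_r}\tilde{U}_s$ is deterministic by \cref{lem:projected-paulis}.

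\textbf{Step 1: expectation over outcomes.} For a fixed $x$ write $z = A_r x + c_r$. Measuring the first $\ell$ qubits in the $Z$ basis and the last $n-\ell$ qubits in the eigenbasis of $\sigma_{x,z}$ gives the joint statistic
\begin{align*}
\E\big[\ind\{s^{(r,x,j)} = s\}\, O^{(r,x,j)}\big] = \tr\big((\ketbra{s}\otimes \sigma_{x,z})\rho\big).
\end{align*}
This holds because $\ketbra{s}\otimes\sigma_{x,z}$ is the observable whose eigenvalue $\pm 1$ is read off when the outcome is $s$, and it vanishes otherwise. One caveat: the Pauli basis measurement is done qubitwise. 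On identity sites we take $O$ to be the product of the outcomes on the non-identity sites only, which is the standard Pauli-observable readout. The phase $i^{x\cdot z}$ makes $\sigma_{x,z}$ Hermitian, so its eigenvalues are $\pm1$ and the sign convention is consistent.

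\textbf{Step 2: expectation over $(A_r,c_r)$.} Multiplying by $\pi_s^{(r)}(x)$ and summing over $x$, the conditional expectation becomes
\begin{align*}
\sum_x \tr\big((\ketbra{s}\otimes\sigma_{x,A_rx+c_r})\rho\big)\,\tilde{U}_s^\dagger\sigma_{x,A_rx+c_r}\tilde{U}_s .
\end{align*}
Since $c_r$ is uniform and independent of $A_r$, for each fixed $x$ the vector $A_rx + c_r$ is uniform on $\F_2^{n-\ell}$. Only this marginal uniformity is needed here, because the expression is linear in the per-$x$ terms; pairwise independence matters only for the variance. Taking expectation over $c_r$ (or over $A_r, c_r$ jointly), each term therefore becomes $2^{-(n-\ell)}\sum_z \tr((\ketbra{s}\otimes\sigma_{x,z})\rho)\,\tilde{U}_s^\dagger\sigma_{x,z}\tilde{U}_s$. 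Summing over $x$ gives the full Pauli sum
\begin{align*}
\tilde{U}_s^\dagger\Big(\frac{1}{2^{n-\ell}}\sum_{x,z} \tr\big((\ketbra{s}\otimes\sigma_{x,z})\rho\big)\,\sigma_{x,z}\Big)\tilde{U}_s .
\end{align*}

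\textbf{Step 3: Pauli completeness.} The $\{\sigma_{x,z}\}$ form an orthogonal basis of operators on $(\C^2)^{\otimes(n-\ell)}$ with $\tr(\sigma_{x,z}^\dagger\sigma_{x',z'}) = 2^{n-\ell}\delta$. Hence for any operator $B$,
\begin{align*}
B = 2^{-(n-\ell)}\sum_{x,z}\tr(\sigma_{x,z}B)\,\sigma_{x,z},
\end{align*}
where Hermiticity of $\sigma_{x,z}$ removes the dagger. Apply this to $B = (\bra{s}\otimes I)\rho(\ket{s}\otimes I)$, noting that $\tr(\sigma_{x,z}B) = \tr((\ketbra{s}\otimes\sigma_{x,z})\rho)$. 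Conjugating by $\tilde{U}_s$ then yields $(\bra{s}\otimes\tilde{U}_s^\dagger)\rho(\ket{s}\otimes\tilde{U}_s)$, which is the claim.

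The main point needing care is Step 1: confirming that the recorded $\pm1$ outcome, under the paper's measurement convention for $\sigma_{x,z}$ with its $i^{x\cdot z}$ phase and identity factors, has mean exactly $\tr((\ketbra{s}\otimes\sigma_{x,z})\rho)$. The remaining steps are routine linearity.
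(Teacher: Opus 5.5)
Your proposal is correct and follows essentially the same route as the paper. Marginal uniformity of $A_r x + c_r$ for each fixed $x$ collapses the sum to the full Pauli expansion of $(\bra{s}\otimes I)\rho(\ket{s}\otimes I)$, and conjugating by $\tilde{U}_s$ via \cref{lem:projected-paulis} and linearity finishes the proof; your explicit conditioning on $(A_r, c_r)$ and your check of the $\pm 1$ measurement statistic just spell out steps the paper leaves implicit.
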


\begin{proof}
    For fixed $x$, $A_r x + c_r$ is uniform. Hence, \begin{align*}
        \E\big[\sum_x \ind\{s^{(r, x, j)} = s\} \cdot O^{(r, x, j)} \sigma_{x, A_r x + c_r} \big] = \frac{1}{2^{n-\ell}} \sum_{x, z} \tr(\ketbra{s} \otimes \sigma_{x, z} \rho) \sigma_{x,z} = (\bra{s} \otimes I) \rho (\ket{s} \otimes I).
    \end{align*}
    One concludes by \cref{lem:projected-paulis} and linearity.
\end{proof}

To apply the guarantees of Lepskii's method, we need only bound a second moment of our estimates.

\begin{lemma}\label{lem:second-moment}
     Using definitions from \cref{alg:subspace-tomography}, it holds that for any $r, s$, \begin{align*}
         \|\E {\hat{\rho}_{s}^{(r)}}{\hat{\rho}_{s}^{(r)}}\|_\infty \leq 3k \left( \tr(\ketbra{s}\otimes I \rho)^2 + \frac{\tr(\ketbra{s} \otimes I \rho)}{2^\ell} \right).
     \end{align*}
\end{lemma}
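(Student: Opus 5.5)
We expand the square of $\hat\rho_s^{(r)} = \sum_x \hat\rho_s^{(r,x)}$ into diagonal terms ($x = x'$) and off-diagonal terms ($x \neq x'$), and bound each separately in operator norm. Write $q_s := \tr(\ketbra{s}\otimes I \rho)$. Let $\rho_s := (\bra{s}\otimes I)\rho(\ket{s}\otimes I)$ be the unnormalized post-selected state on the last $n-\ell$ qubits, so that $\tr \rho_s = q_s$. For each $x$, set $Y_x := \sum_j \ind\{s^{(r,x,j)}=s\}\, O^{(r,x,j)}$, so that $\hat\rho_s^{(r,x)} = 2^{-\ell} Y_x \,\pi_s^{(r)}(x)$.

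\textbf{Diagonal terms.} Conditioned on $(A_r, c_r)$, each $\hat\rho_s^{(r,x)}$ squares to $2^{-2\ell} Y_x^2\, \pi_s^{(r)}(x)^2$. By Lemma~\ref{lem:projected-paulis} and since $\tilde U_s$ is a partial isometry, $\pi_s^{(r)}(x)^2 = \tilde U_s^\dagger \sigma \tilde U_s \tilde U_s^\dagger \sigma \tilde U_s \preceq \tilde U_s^\dagger \sigma^2 \tilde U_s = \tilde U_s^\dagger \tilde U_s = I_{k_0}$. Moreover $Y_x$ is a sum of $2^\ell$ i.i.d. terms, each with second moment $q_s$, so $\E Y_x^2 \le 2^\ell q_s + 2^{2\ell}\, m_x^2$, where $m_x$ is the per-copy mean. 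Crudely bounding $m_x^2 \le q_s^2$, the diagonal sum is at most
\[
\sum_x \left(\frac{q_s}{2^\ell} + q_s^2\right) I = 2^{n-\ell}\left(\frac{q_s}{2^\ell} + q_s^2\right) I .
\]
This is far too large for the $q_s^2$ part, so we must instead keep the exact mean $m_x = \tr(\sigma_{x,z}\rho_s)$ and use Parseval: for each fixed $x$, averaging over the uniform $z = A_r x + c_r$ gives $\E_z \tr(\sigma_{x,z}\rho_s)^2$, and summing over $x$ yields $\sum_{x}\E_z \tr(\sigma_{x,z}\rho_s)^2 = \tr(\rho_s^2) \le q_s^2$. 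The variance part $\sum_x 2^{-\ell} q_s$ still carries a $2^{n-\ell}$ factor, but it should be measured against the trace of $\pi_s(x)^2$ rather than its operator norm. Since $\E_z \sigma_{x,z} M \sigma_{x,z}$ twirls $M$, we get $\sum_x \E_z \pi(x)^2 = \tilde U^\dagger \big(\sum_{x}\E_z \sigma_{x,z} P \sigma_{x,z}\big)\tilde U$ with $P = \tilde U\tilde U^\dagger$, and $\sum_{x,z} 2^{-(n-\ell)}\sigma P \sigma = \tr(P)\, I = k_0 I$. Hence the diagonal contribution is at most $k_0 q_s/2^\ell \le 2k\, q_s/2^\ell$. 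The $q_s^2$ piece of the diagonal is handled by bounding $\pi^2 \preceq I$ and applying the Parseval bound above, giving at most $q_s^2 I$.

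\textbf{Off-diagonal terms.} For $x \neq x'$, the measurement outcomes are conditionally independent given $(A_r, c_r)$, so the cross term is $\E_{A,c}\, m_x m_{x'}\, \pi(x)\pi(x')$. Pairwise independence of $(Ax+c, Ax'+c)$ for $x \ne x'$ (this holds for symmetric $A$, as claimed in the overview) makes the two factors independent, each with mean $2^{-(n-\ell)}$-weighted Pauli expansions. Summing, the off-diagonal part equals $\sum_{x\ne x'} \E[m_x\pi(x)]\,\E[m_{x'}\pi(x')]$. This is the square of the full mean $\tilde U^\dagger\rho_s\tilde U$, minus its diagonal, and the subtracted diagonal is positive semidefinite and can be dropped or bounded the same way. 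Hence the off-diagonal part is at most $\|\tilde U^\dagger\rho_s\tilde U\|^2 + q_s^2 \le 2q_s^2$ in norm.

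Summing the diagonal and off-diagonal bounds gives $\le 3q_s^2 + 2k q_s/2^\ell \le 3k(q_s^2 + q_s/2^\ell)$, which is the claimed bound. The main obstacle is the diagonal variance term: it requires swapping the operator-norm bound on $\pi(x)^2$ for the twirling identity, and checking that the outcome variance ($O^2 = 1$, so the second moment of each summand is the probability $q_s$, independent of $x$) decouples cleanly from $\pi(x)$.
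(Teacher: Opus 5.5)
Your proof is correct. It rests on the same three facts as the paper's proof:
\begin{itemize}
\item conditional independence of the outcomes given $(A_r,c_r)$;
\item pairwise independence of $\{A_rx+c_r\}_x$;
\item the Pauli twirl $\sum_x \E\, \pi_s(x)^2 = k_0 I$, which controls the shot-noise term.
\end{itemize}

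The organization differs. The paper applies the law of total variance to the centered moment: conditional measurement noise, plus the variance of $\sum_x m_x \pi_s(x)$, which pairwise independence reduces to $\sum_x \E[m_x^2\pi_s(x)^2]$ minus a PSD term. It then adds $\|(\E\hat\rho_s^{(r)})^2\|_\infty \le q_s^2$ by the triangle inequality. You instead split the uncentered square into $x=x'$ and $x\neq x'$ terms. The two decompositions are algebraically identical.

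The one substantive difference is how you bound $\sum_x \E[m_x^2 \pi_s(x)^2]$. The paper uses $|m_x|\le q_s$ and then the twirl, which gives $2kq_s^2$. You use $\pi_s(x)^2 \preceq I$ and then Pauli--Parseval, $\sum_x \E_z m_x^2 = \tr(\rho_s^2)\le q_s^2$, which gives a $k$-free $q_s^2$. You therefore prove the stronger bound $2kq_s/2^\ell + O(q_s^2)$. This gain is immaterial downstream: in \cref{lem:subspace-tomography-concentration}, the shot-noise term alone already contributes order $k$ after dividing by $q_s$ and summing over $s$.

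Two small points to tighten:
\begin{itemize}
\item Write $\mu_x := \E[m_x\pi_s(x)]$. Bounding $\|\sum_x \mu_x^2\|_\infty$ ``the same way'' silently uses operator Jensen, $\mu_x^2 \preceq \E[m_x^2\pi_s(x)^2]$, and you should state it. Alternatively, drop $-\sum_x\mu_x^2$ and conclude from $0 \preceq \E(\hat\rho_s^{(r)})^2 \preceq (2kq_s/2^\ell + 2q_s^2) I$.
\item The pairwise independence you borrow from the overview does hold for symmetric $A$. For $w\neq 0$, the linear map $A\mapsto Aw$ is onto $\F_2^{n-\ell}$, so $Aw$ is uniform. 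Meanwhile $c_r$ makes $A_rx+c_r$ uniform and independent of $A_r$.
\end{itemize}
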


\begin{proof}
    For ease of notation, we fix $r$ and write $A_r = A$, $c_r = c$, $\hat\rho^{(r)}_s = \hat\rho$, and $\pi_s = \pi_s^{(r)}$. Expanding the moment, \begin{align*}
        \E \big[ \big(\hat\rho^{(r)}_s - \E[\hat\rho^{(r)}_s]\big)^2\big] = \E \E\big[ \big(\hat\rho - \E[\hat\rho \,|\, A, c]\big)^2 \,|\, A, c\big] + \E \big[ \big( \E[\hat\rho \,|\, A, c] - \E[\hat\rho] \big)^2\big].
    \end{align*}
    Note that \begin{align*}
        \E[\hat\rho \,|\, A, c] = \sum_x \tr(\ketbra{s} \otimes \sigma_{x, Ax + c} \rho) \pi_s(x).
    \end{align*}
    Since all $2^n$ measurement outcomes in a round are independent conditional to $A, c$, \begin{align*}
        \E\big[ \big(\hat\rho - \E[\hat\rho \,|\, A, c]\big)^2 \,|\, A, c\big] &= \frac{1}{2^\ell} \sum_x \E\big[ \big(\ind\{s^{(r, x, j)}=s\}O^{(r, x, j)} - \tr(\ketbra{s} \otimes \sigma_{x, Ax + c} \rho)\big)^2 \,|\, A, c\big] \pi_s(x)^2\\
        &\preceq \frac{1}{2^\ell} \sum_x \tr(\ketbra{s}\otimes I \rho) \pi_s(x)^2.
    \end{align*}
    Note that for fixed $x$, $Ax + c$ is uniform over $\F_2^{n-\ell}$. By the twirling identity, \begin{align*}
        \sum_x \E[\pi_s(x)^2] = \frac{1}{2^{n-\ell}} \tilde{U}_s^\dagger \left( \sum_{x, z} \sigma_{x,z} \tilde{U}_s \tilde{U}_s^\dagger \sigma_{x, z} \right)\tilde{U}_s =\tr(\tilde{U}_s \tilde{U}_s^\dagger) I_{k_0}.
    \end{align*}
    Hence,
    \begin{align*}
        \E \E\big[ \big(\hat\rho - \E[\hat\rho \,|\, A, c]\big)^2 \,|\, A, c\big] \preceq \frac{2k}{2^\ell}\tr(\ketbra{s} \otimes I \rho) .
    \end{align*}

    For the second term, we note that the strings $\{Ax + c\}_{x \in \F_2^{n-\ell}}$ are pairwise independent. \begin{align*}
        \E \big[ \big( \E[\hat\rho \,|\, A, c] - \E[\hat\rho] \big)^2\big] &= \sum_x \E \big( \tr(\ketbra{s} \otimes \sigma_{x, Ax + c} \rho) \pi_{s}(x) - \E[ \tr(\ketbra{s} \otimes \sigma_{x, Ax + c} \rho) \pi_{s}(x)]\big)^2\\
        &\preceq \sum_x \E \tr(\ketbra{s} \otimes \sigma_{x, Ax + c} \rho)^2 \pi_{s}(x)^2\\
        &\preceq 2k\tr(\ketbra{s} \otimes I \rho)^2.
    \end{align*}
    In the last line we again used the twirling identity. To conclude, use a triangle inequality and note that by Hermiticity, \begin{align*}
        \| \E[\hat\rho^{(r)}]^2\|_\infty= \|(\bra{s} \otimes \tilde{U}_s^\dagger)\rho (\ket{s} \otimes \tilde{U}_s)\|_\infty^2 \leq \tr(\ketbra{s} \otimes I \rho)^2.
    \end{align*}
\end{proof}

\begin{lemma}\label{lem:subspace-tomography-concentration}
    For $0 < p_\ell \leq 1/2$, and $k$ larger than an absolute constant, \Cref{alg:subspace-tomography} uses ${O(2^n n M \log(1/p_\ell))}$ copies of $\rho$, and returns Hermitian matrices \begin{align*}
        \hat\rho_s = (\bra{s}\otimes \tilde{U}_s^\dagger)\rho (\ket{s}\otimes \tilde{U}_s) + \xi_s,\quad s \in \{0,1\}^\ell
    \end{align*}
    such that with probability at least $1-\delta$, \begin{align*}
        \sum_{\substack{s \in \{0,1\}^\ell\\\tr(\ketbra{s} \otimes I \rho) \geq p_\ell}} \frac{\|\xi_s \|_\infty^2}{\tr(\ketbra{s} \otimes I\rho)} \lesssim \frac{k(\ell+1) \log(k/p_\ell \delta)}{M}.
    \end{align*}
\end{lemma}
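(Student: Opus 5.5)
The plan is to apply \cref{lem:Lepskii} separately to each prefix $s$ and then union bound over the $2^\ell$ prefixes. The copy count is immediate from \cref{alg:subspace-tomography}. It allocates $2^n R$ copies with $R = 2M\lceil \log_2(\beta_{max}^2/\beta_{min}^2)\rceil$. Since $\beta_{max}^2/\beta_{min}^2 = 8/(p_\ell^2 + p_\ell/2^\ell) \le 8/p_\ell^2$, we get $\log(\beta_{max}^2/\beta_{min}^2) = O(\log(1/p_\ell))$, using $p_\ell \le 1/2$. This already gives $O(2^n M \log(1/p_\ell))$ copies, which is within the claimed bound. Note also $\beta_{max}^2 \ge 8\beta_{min}^2$ because $p_\ell^2 + p_\ell/2^\ell \le 1$, so \cref{lem:Lepskii} is applicable.

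For fixed $s$, write $q_s = \tr(\ketbra{s}\otimes I\rho)$. The round estimates $\{\hat\rho_s^{(r)}\}_r$ are i.i.d. across $r$, because $(A_r,c_r)$ and the measurement copies are fresh each round and $\tilde U_s$ is fixed. By the preceding unbiasedness lemma, their mean is $(\bra{s}\otimes\tilde U_s^\dagger)\rho(\ket{s}\otimes\tilde U_s)$. They are Hermitian, since each $\pi_s^{(r)}(x)$ is a compression of a Hermitian Pauli. By \cref{lem:second-moment}, their second moment is at most $\beta_s^2 := 3k(q_s^2 + q_s/2^\ell)$. When $q_s \ge p_\ell$, monotonicity gives $\beta_s^2 \ge \beta_{min}^2$, and $q_s\le 1$ gives $\beta_s^2 \le 6k \le \beta_{max}^2$. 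So \cref{lem:Lepskii}, run with failure probability $\delta/2^\ell$ and dimension $k_0\le 2k$, yields
\begin{align*}
\|\xi_s\|_\infty^2 \lesssim \frac{k(q_s^2 + q_s/2^\ell)\log(k 2^\ell/(p_\ell\delta))}{M}.
\end{align*}
Two points need a little care. First, \cref{lem:Lepskii} is stated with a bound $\beta^2$ on $\|\E X^2\|$, while we have the bound on the raw second moment; this is exactly what \cref{lem:second-moment} provides. Second, the factor $\log(\beta_{max}^2/\beta_{min}^2\cdot d\cdot 2^\ell/\delta)$ is $O(\log(k/p_\ell\delta) + \ell)$.

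Finally, a union bound over all $s$ with $q_s\ge p_\ell$ gives overall success probability $1-\delta$. Dividing by $q_s$ and summing,
\begin{align*}
\sum_{s:\,q_s\ge p_\ell}\frac{\|\xi_s\|_\infty^2}{q_s}\lesssim \frac{k(\log(k/p_\ell\delta)+\ell)}{M}\sum_s\Big(q_s + 2^{-\ell}\Big)\le \frac{2k(\log(k/p_\ell\delta)+\ell)}{M},
\end{align*}
using $\sum_s q_s = 1$ and $\sum_s 2^{-\ell} = 1$. The bound $\log(k/p_\ell\delta)+\ell \le (\ell+1)\log(k/p_\ell\delta)$ holds once $k$ exceeds an absolute constant, since then $\log(k/p_\ell\delta)\ge 1$. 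This gives the claimed bound.

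The main obstacle is bookkeeping rather than a new idea. One must check that the Lepskii range $[\beta_{min}^2,\beta_{max}^2]$ really contains $\beta_s^2$ for every prefix in the sum. One must also check that the $2^\ell$ factor from the union bound enters only logarithmically, producing the $(\ell+1)$ factor. Prefixes with $q_s<p_\ell$ are simply excluded from the sum, so no guarantee is needed for them.
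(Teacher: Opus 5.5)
Your proposal is correct and follows essentially the same route as the paper. For each prefix with $q_s \ge p_\ell$ you check that $\beta_s^2 \in [\beta_{min}^2, \beta_{max}^2]$, apply \cref{lem:second-moment} and \cref{lem:Lepskii} with a union bound over the $2^\ell$ prefixes (so that $2^\ell$ enters only inside the logarithm), then divide by $q_s$ and sum using $\sum_s (q_s + 2^{-\ell}) = 2$; your tighter ratio bound $\beta_{max}^2/\beta_{min}^2 \le 8/p_\ell^2$ and your explicit copy count are minor bookkeeping refinements, not a different argument.
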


\begin{proof}
    For $s \in \{0,1\}^\ell$, let \begin{align*}
        \beta_s^2 = 3k \left( \tr(\ketbra{s} \otimes I \rho)^2 + \frac{\tr(\ketbra{s} \otimes I \rho)}{2^\ell}\right).
    \end{align*}
    If $\tr(\ketbra{s} \otimes I \rho) \geq p_\ell$, then $\beta_{min}^2 \leq \beta_s^2 \leq \beta_{max}^2$. By \cref{lem:second-moment,lem:Lepskii}, it holds with probability at least $1-\delta$ that for all $s \in \{0,1\}^\ell$ such that $\tr(\ketbra{s} \otimes I \rho) \geq p_\ell$, \begin{align*}
        \|\hat\rho_s - (\bra{s} \otimes \tilde{U}_s^\dagger) \rho(\ket{s}\otimes \tilde{U}_s)\|_{\infty}^2 &\lesssim \frac{\beta_s^2}{M} \log\left( \frac{\beta_{max}^2}{\beta_{min}^2} \cdot \frac{2k \cdot 2^\ell}{\delta}\right)\\
        &\lesssim \frac{k(\ell+1) \log(k/p_\ell\delta)}{M} \left(\tr(\ketbra{s} \otimes I \rho)^2 + \frac{\tr(\ketbra{s} \otimes I \rho)}{2^\ell}\right).
    \end{align*}
    In the second line we used $p_\ell \leq 1/2$ and \begin{align*}
        \frac{\beta_{max}^2}{\beta_{min}^2} \leq \frac{8}{p_\ell^2 + {p_\ell}/{2^\ell}} \leq \frac{8 \cdot2^\ell}{p_\ell^2}.
    \end{align*}
    The claim follows by summing.
\end{proof}

\begin{lemma}\label{lem:subspace-tomography-runtime}
    For $p_\ell \in (0, 1/2]$, \cref{alg:subspace-tomography} runs in $O(2^n n^2 M \poly(k) \log(1/p_\ell)^2)$ time.
\end{lemma}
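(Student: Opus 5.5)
We bound the running time of \cref{alg:subspace-tomography} by accounting for each of its stages in turn. Throughout we use that $R = 2M\lceil \log_2(\beta_{max}^2/\beta_{min}^2)\rceil$. From $\beta_{min}^2 = 3k(p_\ell^2 + p_\ell/2^\ell)\geq 3kp_\ell^2$ and $\beta_{max}^2 = 24k$ we get $\beta_{max}^2/\beta_{min}^2 \leq 8/p_\ell^2$. Since $p_\ell \leq 1/2$, this gives $\log(\beta_{max}^2/\beta_{min}^2) = O(\log(1/p_\ell))$, and hence $R = O(M\log(1/p_\ell))$.

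\emph{Sampling and pre-computation.} Drawing $A_r$ and $c_r$ costs $O(n^2)$ per round, so $O(n^2 R)$ overall. \cref{lem:pauli-computation-runtime} bounds the call to \cref{alg:projected-Pauli-precomputation} by $O(2^n n^2 k^2 R) = O(2^n n^2 k^2 M \log(1/p_\ell))$.

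\emph{Measurement loop.} There are $R\cdot 2^{n-\ell}\cdot 2^\ell = 2^n R$ measurements in total.
\begin{itemize}
\item Determining the basis $\sigma_{x,A_rx+c_r}$ requires computing $A_r x + c_r$, which costs $O(n^2)$.
\item Each outcome $(s^{(r,x,j)}, O^{(r,x,j)})$ then contributes to exactly one estimate $\hat\rho^{(r,x)}_s$. We implement the $1/2^\ell$-weighted sum sparsely: only the prefix $s$ actually observed is updated, by adding $\pm 2^{-\ell}\pi_s^{(r)}(x)$, at cost $O(k^2)$. Prefixes that are never observed keep zero estimates implicitly, so the loop never iterates over all $2^\ell$ prefixes for each $x$.
\item Summing over $x$ to form $\hat\rho^{(r)}_s$ is done in the same pass, by accumulating directly into $\hat\rho^{(r)}_s$.
\end{itemize}
The loop therefore costs $O(2^n R(n^2 + k^2))$.

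\emph{Lepskii step.} \cref{alg:Lepskii} is run once for each of the $2^\ell$ prefixes on $d = k_0 \leq 2k$ dimensional matrices. With $(I+1)M \leq R$ examples, \cref{lem:Lepskii} gives a cost of $O(M\poly(k)\log(1/p_\ell)^2)$ per prefix. This covers the spectral clipping of each example and the $O(I^2)$ operator-norm comparisons. Summing over prefixes gives $O(2^\ell M\poly(k)\log(1/p_\ell)^2) \leq O(2^n M\poly(k)\log(1/p_\ell)^2)$.

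Adding the three stages gives $O(2^n n^2 M\poly(k)\log(1/p_\ell)^2)$, as claimed.

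\emph{Main subtlety.} The only real care needed is in the measurement loop. A naive evaluation of the displayed estimator would touch every $s \in\{0,1\}^\ell$ for every $(r,x,j)$, which would cost an extra factor of $2^\ell$. The sparse accumulation described above avoids this. The same sparsity carries into the Lepskii step: prefixes whose samples are all zero yield zero estimates, and these can be handled lazily, although the crude bound above does not even need this.
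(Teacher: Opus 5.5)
Your proposal is correct and follows essentially the same accounting as the paper: the pre-computation cost from \cref{lem:pauli-computation-runtime}, a measurement loop in which each observation updates only the one matrix for its observed prefix (so no extra $2^\ell$ factor), the $2^\ell$ Lepskii calls, and $R = O(M\log(1/p_\ell))$. You are a bit more explicit than the paper in deriving $\beta_{max}^2/\beta_{min}^2 \leq 8/p_\ell^2$ and in accumulating directly into $\hat\rho^{(r)}_s$ rather than keeping per-$x$ sums, but these differences are cosmetic.
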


\begin{proof}
    By \cref{lem:pauli-computation-runtime}, the pre-computation of Paulis takes $O(2^n n^2 k^2 \cdot M \log(\beta_{max}^2/\beta_{min}^2))$ time. For each measurement round $r$, and $(n-\ell)$-bit string $x$, we maintain $2^\ell$-many $k_0 \times k_0$ matrix sums $\{\hat\rho_{s}^{(r, x)}\}_s$, which we add to on the fly. Iterating over $2^\ell$-many copies, we can add to a single matrix for each observation. This takes at most $O(2^n k^2)$ time in each round. At the end of each round, we sum estimates over all $x$, for each $s$, to produce the matrices $\{\hat\rho_s^{(r)}\}_s$; this takes $O(2^n k^2)$ time in each round. Each of the $2^\ell$ calls to \cref{alg:Lepskii} takes only $O(M \poly(k) \log(\beta_{max}^2/\beta_{min}^2)^2)$ time.
    Since $R = O(M \log(\beta_{max}^2/\beta_{min}^2))$, altogether, we have at most $O(2^n n^2 M \poly(k) \log(1/p_\ell)^2)$.
\end{proof}

\section{Full Algorithm}

\subsection{Projected Energy Inequalities}

The analysis hinges on two perturbation estimates, specialized to our needs. The first may be seen as a projected version of the classic Davis-Kahan $\sin \theta$ bound, and is perhaps the first hint that a dimension-reduction strategy is suitable. 

\begin{lemma}\label{lem:projected-davis-kahan}
    Let $\sigma \in \C^{d\times d}$ be a matrix of the form \begin{align*}
        \sigma = \lambda \ketbra{\psi} + \nu,
    \end{align*}
    for a unit vector $\ket{\psi}$, $\lambda > 0$, and $\nu \succeq 0$ satisfying a spectral tolerance \begin{align*}
        \bra{\psi}\nu \ket{\psi} = 0, \quad \lambda - \|\nu\|_\infty \geq \Delta > 0.
    \end{align*}
    Let $\xi$ be Hermitian, $\Pi$ be any orthogonal projection, and let $\hatket{\psi}$ be a top eigenvector of $\hat\sigma = \Pi \sigma \Pi + \xi$ (or an arbitrary unit vector if $\hat\sigma = 0$). Then it holds that \begin{align*}
        1 - \left|\hatbraket{\psi}{\psi}\right|^2 &\lesssim \frac{\lambda}{\Delta} \|(I-\Pi)\ket{\psi}\|_2^2+ \frac{\|\xi\|_\infty^2}{\Delta^2}.
    \end{align*}
\end{lemma}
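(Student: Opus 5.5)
The plan is to reduce to a classical Davis--Kahan bound for the compressed matrix $\Pi\sigma\Pi$. Everything is split into an ``unperturbed'' part, controlled by an energy argument, and a ``perturbed'' part, controlled by the $\sin\theta$ theorem. Two trivial regimes come first. Since the left side is at most $1$, we may assume $\lambda\|(I-\Pi)\ket{\psi}\|_2^2 \le \Delta/4$ and $\|\xi\|_\infty \le \Delta/4$; otherwise one of the two terms on the right is already $\gtrsim 1$. We also record that $\nu \succeq 0$ with $\bra{\psi}\nu\ket{\psi}=0$ forces $\nu\ket{\psi}=0$, so $\sigma$ has top eigenvector $\ket{\psi}$ and second eigenvalue at most $\|\nu\|_\infty \le \lambda-\Delta$.

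\emph{Step 1: the spectrum of $\Pi\sigma\Pi$.} Write $\ket{v}=\Pi\ket{\psi}$ and $\ket{w}=(I-\Pi)\ket{\psi}$, so $\|v\|_2^2 = 1-\|w\|_2^2$.
\begin{itemize}
\item Upper bound on the second eigenvalue: by Cauchy interlacing for compressions, $\lambda_2(\Pi\sigma\Pi)\le\lambda_2(\sigma)\le\lambda-\Delta$.
\item Lower bound on the top eigenvalue: test against $\ket{u}=\ket{v}/\|v\|_2$. Then $\bra{u}\sigma\ket{u}\ge \lambda|\braket{\psi}{v}|^2/\|v\|_2^2=\lambda\|v\|_2^2=\lambda(1-\|w\|_2^2)\ge \lambda-\Delta/4$.
\end{itemize}
Hence $\Pi\sigma\Pi$ has a top eigenvector $\ket{\chi}$, lying in the range of $\Pi$, with spectral gap at least $3\Delta/4$. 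In particular $\hat\sigma\neq 0$ in the remaining regime: $\|\hat\sigma\|_\infty\ge \lambda-\Delta/4-\|\xi\|_\infty>0$. So the ``arbitrary unit vector'' clause only arises inside the trivial regimes.

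\emph{Step 2: $\ket{\chi}$ is close to $\ket{\psi}$.} I will use a direct energy comparison. For a unit $\ket{y}$ in the range of $\Pi$, orthogonality $\nu\ket{\psi}=0$ gives
\begin{align*}
\bra{y}\sigma\ket{y} \le \lambda t + (\lambda-\Delta)(1-t) = \lambda - \Delta(1-t), \quad t=|\braket{\psi}{y}|^2 = |\braket{v}{y}|^2 .
\end{align*}
Apply this with $y=\chi$ and compare with $\bra{\chi}\sigma\ket{\chi}\ge\bra{u}\sigma\ket{u}\ge\lambda(1-\|w\|_2^2)$. This yields $1-|\braket{\psi}{\chi}|^2\le \lambda\|w\|_2^2/\Delta$.

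\emph{Step 3: the perturbation by $\xi$.} Apply the Davis--Kahan $\sin\theta$ theorem to $\Pi\sigma\Pi$ versus $\hat\sigma=\Pi\sigma\Pi+\xi$. Since the gap is at least $3\Delta/4$ and $\|\xi\|_\infty\le\Delta/4$, the effective gap is still $\gtrsim\Delta$. This gives $1-|\hatbraket{\psi}{\chi}|^2\lesssim\|\xi\|_\infty^2/\Delta^2$.

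\emph{Step 4: combine.} The quantity $\sqrt{1-|\langle a|b\rangle|^2}$ (sine of the angle between rays) is a metric. So
\begin{align*}
1-|\hatbraket{\psi}{\psi}|^2 \le 2\bigl(1-|\braket{\psi}{\chi}|^2\bigr)+2\bigl(1-|\hatbraket{\psi}{\chi}|^2\bigr),
\end{align*}
which is the claimed bound.

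The main obstacle is getting the quadratic dependence $\|\xi\|_\infty^2/\Delta^2$ rather than a linear one. A direct energy argument applied to $\hat\sigma$ would only give $\|\xi\|_\infty/\Delta$, which is why $\xi$ is routed through Davis--Kahan. That in turn requires the gap lower bound for the compressed matrix from Step 1; the interlacing step together with the case reduction is where care is needed.
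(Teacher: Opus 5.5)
Your proof is correct and follows essentially the same route as the paper. An energy comparison shows the top eigenvector of $\Pi\sigma\Pi$ has infidelity at most $\frac{\lambda}{\Delta}\|(I-\Pi)\ket{\psi}\|_2^2$ with $\ket{\psi}$; interlacing gives $\Pi\sigma\Pi$ a gap of order $\Delta$; Davis--Kahan handles $\xi$; and a triangle inequality on root infidelity combines the two bounds. The differences are cosmetic. You dispose of the degenerate cases by reducing upfront to $\lambda\|(I-\Pi)\ket{\psi}\|_2^2,\ \|\xi\|_\infty \le \Delta/4$, and you use compression interlacing $\lambda_2(\Pi\sigma\Pi)\le\lambda_2(\sigma)$, which is valid here since $\sigma\succeq 0$. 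The paper instead uses rank-one interlacing against $\Pi\nu\Pi$.
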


\begin{proof}
    If $\Pi \sigma \Pi = 0$, the bound is immediate, and holds for any unit vector $\hatket{\psi}$. Similarly if $\hat\sigma = 0$, then \begin{align*}
        \|\xi\|_\infty = \|\Pi \sigma \Pi\|_\infty \geq \lambda \|\Pi \ket{\psi}\|_2^2 \geq \Delta \|\Pi \ket{\psi}\|_2^2.
    \end{align*}
    So the bound becomes vacuous, as \begin{align*}
        \frac{\lambda}{\Delta} \|(I-\Pi)\ket{\psi}\|_2^2+ \frac{\|\xi\|_\infty^2}{\Delta^2} \geq 1-\|\Pi\ket{\psi}\|_2^2 + \|\Pi \ket{\psi}\|_2^4 \geq 1/4.
    \end{align*}
    Henceforth, we assume $\Pi \sigma \Pi \neq 0$ and $\hat\sigma \neq 0$.
    
    Let $\ket{\phi}$ be a top eigenvector of $\Pi \sigma \Pi$. By positivity of $\nu$, we note $\lambda_1(\Pi \sigma \Pi) \geq \lambda \|\Pi \ket{\psi}\|_2^2$. Also, since the support of $\nu$ is orthogonal to $\ket\psi$, \begin{align*}
        \lambda_1(\Pi \sigma \Pi) \leq \bra{\phi} \sigma \ket{\phi} \leq \lambda \left| \braket{\psi}{\phi} \right|^2 + \|\nu\|_\infty (1 - \left|\braket{\phi}{\psi}\right|^2).
    \end{align*}
    Using both inequalities, we have \begin{align*}
        1 - \left|\braket{\phi}{\psi}\right|^2 \leq \frac{\lambda}{\Delta} \|(I-\Pi)\ket{\psi}\|_2^2.
    \end{align*}
    
    Next, by rank-1 interlacing, \begin{align*}
        \lambda_2(\Pi \sigma \Pi) \leq \|\Pi \nu \Pi\|_\infty \leq \|\nu\|_\infty.
    \end{align*}
    Hence, \begin{align*}
        \lambda_1(\Pi \sigma \Pi) - \lambda_2(\Pi \sigma \Pi) \geq \Delta - \lambda \|(I-\Pi)\ket{\psi}\|_2^2.
    \end{align*}
    By Davis-Kahan, \begin{align*}
        1 - \left| \hatbraket{\psi}{\phi} \right|^2 &\lesssim \left( \frac{\|\xi\|_\infty}{\Delta - \lambda \|(I-\Pi)\ket{\psi}\|_2^2}\right)^2 \lesssim \frac{\|\xi\|_\infty^2}{\Delta^2}.
    \end{align*}
    In the second inequality, we used the fact that if $\Delta < 2\lambda \|(I-\Pi)\ket{\psi}\|_2^2$, the global claim is immediate.
    Conclude by a triangle inequality on root infidelity then squaring.
\end{proof}

Next, we study the stability of our pruning procedure as we pass between layers. On a first pass, the following may be read with $t = \tr(\nu)$. For the purpose of our analysis, however, we will later need some slack in this parameter.

\begin{lemma}\label{lem:eigenspace-truncation-quadratic}
    Let $\sigma \in \C^{d\times d}$ be a matrix of the form \begin{align*}
        \sigma = \lambda \ketbra{v} + \nu,
    \end{align*}
    for a nonzero (possibly non-unit) vector $\ket{v}$, $\lambda > 0$, and $\nu \succeq 0$. Let $\hat\sigma = \sigma + \xi$
    for $\xi$ Hermitian. Let $\Pi$ be the orthogonal projection onto a top-$k$ eigenspace of $\hat\sigma$ (choosing an arbitrary top$-k$ eigenspace in case of degeneracy), for $k \leq d$. Then \begin{align*}
        \|(I-\Pi)\ket{v}\|_2^2 &\lesssim \frac{t}{\lambda k} + \frac{k\|\xi\|_\infty^2}{\lambda (\lambda \braket{v} + t)},
    \end{align*}
    for any $t \geq \tr(\nu)$.
\end{lemma}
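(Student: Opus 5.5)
The plan is to combine a crude \emph{linear} energy bound with a sharper Sylvester-equation (Davis--Kahan type) bound, and to choose between them according to how large $\|\xi\|_\infty$ is relative to the natural spectral scale $\theta \asymp t/k$. Throughout, write $w = (I-\Pi)\ket{v}$ and $a = \|w\|_2^2$.

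\emph{Step 1 (linear bound).} Since $\Pi$ projects onto a top-$k$ eigenspace of $\hat\sigma$, we have $\bra{w}\hat\sigma\ket{w} \le \lambda_{k+1}(\hat\sigma)\, a$. On the other hand, $\nu \succeq 0$ and $\braket{v}{w} = a$ give
\[
\bra{w}\hat\sigma\ket{w} \ge \lambda a^2 - \|\xi\|_\infty a .
\]
By Weyl's inequality and rank-one interlacing, $\lambda_{k+1}(\hat\sigma) \le \lambda_{k+1}(\sigma) + \|\xi\|_\infty \le \lambda_k(\nu) + \|\xi\|_\infty \le t/k + \|\xi\|_\infty$. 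Combining these yields
\[
a \le \frac{t}{\lambda k} + \frac{2\|\xi\|_\infty}{\lambda}.
\]
This already suffices whenever $\|\xi\|_\infty \gtrsim \theta$, where $\theta$ is the spectral scale fixed in Step 2 below (and the $\theta$ used in Steps 2--3 is the same one). Indeed, in that regime $\|\xi\|_\infty/\lambda \lesssim \|\xi\|_\infty^2/(\lambda\theta)$, which has the form of the second term in the claim.

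\emph{Step 2 (splitting $v$ by the spectrum of $\sigma$).} Fix $\theta = c\,t/k$ for a suitable constant $c$. Let $P$ be the spectral projector of $\sigma$ onto eigenvalues $\ge \theta$, and set $Q = I-P$. Then $\lambda\|Q v\|^4 \le \bra{v}Q\sigma Q\ket{v} \le \theta\|Qv\|^2$, so
\[
\|Qv\|^2 \le \frac{\theta}{\lambda} \lesssim \frac{t}{\lambda k}.
\]
This accounts for the first term in the claim. Moreover, by Markov's inequality on the spectrum of $\nu$ together with interlacing, $\sigma$ has at most $k/4+1 \le k$ eigenvalues above $\theta/2$. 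Hence $\lambda_{k+1}(\hat\sigma) \le \theta/2 + \|\xi\|_\infty \le 3\theta/4$ in the regime $\|\xi\|_\infty \le \theta/4$.

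\emph{Step 3 (Sylvester bound on $\Pi^\perp P v$).} Because $\Pi^\perp$ commutes with $\hat\sigma$, the operator $Y = \Pi^\perp P$ satisfies
\[
Y S - B Y = -\Pi^\perp \xi P ,
\qquad S = P\sigma P \succeq \theta \text{ on } \operatorname{ran} P,
\qquad B = \Pi^\perp\hat\sigma\Pi^\perp \preceq \tfrac34\theta .
\]
Its solution is
\[
Y = -\int_0^\infty e^{sB}\,\Pi^\perp\xi P\,e^{-sS}\,ds .
\]
Next I would use the weighting $\lambda P\ketbra{v}P \preceq S$, which lets me write $Pv = S^{1/2}g$ with $\|g\|\le\lambda^{-1/2}$. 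Then
\[
\|YPv\| \le \lambda^{-1/2}\,\|YS^{1/2}\| \le \lambda^{-1/2}\|\xi\|_\infty \int_0^\infty e^{3s\theta/4}\sup_{\mu\ge\theta}\sqrt{\mu}\,e^{-s\mu}\,ds \lesssim \frac{\|\xi\|_\infty}{\sqrt{\lambda\theta}} .
\]
Squaring and adding the bound on $\|Qv\|^2$ from Step 2 gives
\[
a \lesssim \frac{t}{\lambda k} + \frac{k\|\xi\|_\infty^2}{\lambda t}.
\]

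\emph{Main obstacle.} The remaining issue is upgrading the denominator from $t$ to $\lambda\|v\|^2 + t$. When $t \gtrsim \lambda\|v\|^2$ the two agree, so I would split into cases. In the case $t \ll \lambda\|v\|^2$, I would refine Step 3 by separating the single eigenvalue of $\sigma$ near $\lambda\|v\|^2$, which carries almost all of $v$, from the rest. For that component, the factor $\sqrt{\mu}\,e^{-s\mu}$ in the integral is evaluated at $\mu \approx \lambda\|v\|^2$; this yields $\|\xi\|_\infty/\sqrt{\lambda\cdot\lambda\|v\|^2}$ times the (small) weight of $v$ there. This is the step I expect to require the most care: if the weights do not combine cleanly, one may lose a factor of $k$, and some extra slack (for instance, enlarging $\theta$ to $c(\lambda\|v\|^2+t)/k$ only in the Sylvester step while keeping $\theta = c\,t/k$ for bounding $Qv$) would be needed.
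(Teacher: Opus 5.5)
Your proposal has a genuine gap, and it sits exactly at the obstacle you flagged. Steps 1--3 only establish
\[
\|(I-\Pi)\ket{v}\|_2^2 \lesssim \frac{t}{\lambda k} + \frac{k\|\xi\|_\infty^2}{\lambda t}.
\]
This weaker bound already follows from Step 1 alone by AM--GM, since $2\|\xi\|_\infty \le t/k + k\|\xi\|_\infty^2/t$. So the Sylvester argument of Steps 2--3, although correct (the $S^{1/2}$-weighting is a nice touch), adds nothing beyond Step 1. That one-line argument is exactly how the paper handles the easy case $t \ge \lambda\braket{v}/2$.

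The actual content of the lemma is the denominator $\lambda\braket{v}+t$ when $t < \lambda\braket{v}/2$. For $t=0$ (so $\nu=0$) your bound is vacuous, whereas the lemma gives a Davis--Kahan-type bound with gap $\lambda\braket{v}$. The later application needs precisely this, since there $\lambda\braket{v}+t$ becomes $\tr(\ketbra{s}\otimes I\rho)$.

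Your remark that Step 1 suffices once $\|\xi\|_\infty \gtrsim \theta$ is also false in this regime. For $t/k \ll \|\xi\|_\infty \ll \lambda\braket{v}/k$, the term $\|\xi\|_\infty/\lambda$ dominates both terms of the claim. In the same range the Sylvester step has no spectral gap at threshold $\theta = ct/k$.

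Your sketched repair does not close this either. Raising the threshold to $c(\lambda\braket{v}+t)/k$ leaves the eigenvalues of $\sigma$ between $ct/k$ and that threshold uncontrolled. Isolating the top eigenvector $u_1$ of $\sigma$ leaves a component of $v$ orthogonal to $u_1$ whose squared norm can be of order $t/\lambda$ when $t$ is a constant fraction of $\lambda\braket{v}$. That is a factor $\Theta(k)$ above the target $t/(\lambda k)$, so this component cannot simply be bounded by its norm.

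The missing idea is to treat $\ket{v}$ itself as an approximate eigenvector of $\sigma$, with eigenvalue $\lambda\braket{v}$ and residual $\nu\ket{v}$. Only the part of $\nu$ visible to $I-\Pi$ then matters. Since $I-\Pi$ commutes with $\hat\sigma$,
\[
\lambda\braket{v}\,(I-\Pi)\ket{v} = (I-\Pi)\hat\sigma(I-\Pi)\ket{v} - (I-\Pi)\nu\ket{v} - (I-\Pi)\xi\ket{v}.
\]
Pair this with the unit vector along $(I-\Pi)\ket{v}$ and use your bound $\lambda_{k+1}(\hat\sigma) \le \tr(\nu)/k + \|\xi\|_\infty$. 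This gives
\[
\Big(\lambda\braket{v} - \frac{\tr(\nu)}{k}\Big)\|(I-\Pi)\ket{v}\|_2 \le \big(\|(I-\Pi)\nu\|_\infty + 2\|\xi\|_\infty\big)\sqrt{\braket{v}}.
\]
The key estimate, using $(I-\Pi)\nu(I-\Pi) \preceq (I-\Pi)(\hat\sigma-\xi)(I-\Pi)$, is
\[
\|(I-\Pi)\nu\|_\infty^2 \le \tr(\nu)\,\|(I-\Pi)\nu(I-\Pi)\|_\infty \le \tr(\nu)\Big(\frac{\tr(\nu)}{k} + 2\|\xi\|_\infty\Big).
\]
When $t < \lambda\braket{v}/2$, the gap on the left is at least $\lambda\braket{v}/2 \gtrsim \lambda\braket{v}+t$. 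To finish:
\begin{itemize}
\item square both sides and use $\braket{v}/(\lambda\braket{v}+t) \le 1/\lambda$;
\item apply Young to the cross term, $2\tr(\nu)\|\xi\|_\infty \le \tr(\nu)^2/k + k\|\xi\|_\infty^2$;
\item use $\tr(\nu)^2/(k\lambda(\lambda\braket{v}+t)) \le t/(\lambda k)$.
\end{itemize}
This yields the claim directly, with no spectral splitting of $\sigma$ and no Sylvester equation.
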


\begin{proof}
    We bound the energy two ways.
    Using the spectral property of $\Pi$, \begin{align*}
        \lambda \bra{v} (I-\Pi) \ket{v}^2 &\leq \bra{v}(I-\Pi) \sigma (I-\Pi) \ket{v}\\
        &= \bra{v}(I-\Pi) \hat\sigma(I-\Pi)\ket{v} - \bra{v}(I-\Pi)\xi (I-\Pi)\ket{v}\\
        &\leq ( \lambda_{k+1}(\hat\sigma) + \|\xi\|_\infty) \bra{v}(I-\Pi)\ket{v}.
    \end{align*}
    By Weyl's inequality and rank-1 interlacing, \begin{align*}
        \lambda_{k+1}(\hat\sigma) &\leq \lambda_{k+1}(\sigma) + \|\xi\|_\infty\\
        &\leq \lambda_k(\nu) + \|\xi\|_\infty\\
        &\leq \frac{\tr(\nu)}{k} + \|\xi\|_\infty.
    \end{align*}
    Combining, we have \begin{align}\label{eq:linear-bound}
        \|(I-\Pi)\ket{v}\|_2^2 \leq \frac{\tr(\nu)}{\lambda k} + \frac{2\|\xi\|_\infty}{\lambda}.
    \end{align}
    
    Second, since $(I-\Pi)$ commutes with $\hat\sigma$ and $\sigma \ket{v} = \lambda \braket{v}\ket{v} + \nu \ket{v}$, \begin{align}\label{eq:spectral-property}
        \lambda \braket{v}(I-\Pi)\ket{v} &= (I-\Pi)\hat\sigma(I-\Pi)\ket{v} - (I-\Pi)\nu \ket{v} - (I-\Pi)\xi \ket{v}.
    \end{align}
    Note that \begin{align*}
        (I-\Pi)\nu (I-\Pi) \preceq (I-\Pi)\hat\sigma(I-\Pi) - (I-\Pi)\xi(I-\Pi).
    \end{align*}
    And hence,
    \begin{align*}
        \|(I-\Pi)\nu\|_\infty^2 &\leq \tr(\nu) 
        \cdot \|(I-\Pi)\nu(I-\Pi)\|_\infty \leq \frac{\tr(\nu)^2}{k} + 2\tr(\nu)\|\xi\|_\infty.
    \end{align*}
    Taking the inner product of \cref{eq:spectral-property} with the unit vector $\propto (I-\Pi)\ket{v}$, we observe \begin{align*}
        \lambda \braket{v} \|(I-\Pi)\ket{v}\|_2 &\leq \lambda_{k+1}(\hat\sigma) \|(I-\Pi)\ket{v}\|_2 + \|(I-\Pi)\nu\|_\infty \sqrt{\braket{v}} + \|\xi\|_\infty \sqrt{\braket{v}}\\
        &\leq \frac{\tr(\nu)}{k} \|(I-\Pi)\ket{v}\|_2 + \left(\left(\frac{\tr(\nu)^2}{k} + 2\tr(\nu)\|\xi\|_\infty \right)^{1/2} + 2\|\xi\|_\infty\right)\sqrt{\braket{v}}.
    \end{align*}
    That is, \begin{align}\label{eq:quadratic-bound}
        \left(\lambda \braket{v} - \frac{\tr(\nu)}{k} \right)\|(I-\Pi)\ket{v}\|_2 \leq \left(\left(\frac{\tr(\nu)^2}{k} + 2\tr(\nu)\|\xi\|_\infty \right)^{1/2} + 2\|\xi\|_\infty\right)\sqrt{\braket{v}}.
    \end{align}

    Now, if $t \geq \lambda \braket{v}/2$, then $t \geq (t + \lambda \braket{v})/4$. Then by \cref{eq:linear-bound} and Young's inequality, \begin{align*}
        \|(I-\Pi)\ket{v}\|_2^2 \leq \frac{(\tr(\nu)+t)}{\lambda k} + \frac{k\|\xi\|_\infty^2}{\lambda t} \leq \frac{2t}{\lambda k} + \frac{4k\|\xi\|_\infty^2}{\lambda(\lambda \braket{v} + t)}.
    \end{align*}
    Alternatively, if $t < \lambda \braket{v}/2$, we have \begin{align*}
        \lambda \braket{v} - \frac{\tr(\nu)}{k} \geq \frac{1}{2} \lambda \braket{v} \geq \frac{1}{4}\left( \lambda \braket{v} + t\right).
    \end{align*}
    By \cref{eq:quadratic-bound}, we have \begin{align*}
        \|(I-\Pi)\ket{v}\|_2^2 &\leq \frac{16 \braket{v}}{(\lambda \braket{v} + t)^2} \left(\left(\frac{\tr(\nu)^2}{k} + 2\tr(\nu)\|\xi\|_\infty \right)^{1/2} + 2\|\xi\|_\infty\right)^2\\
        &\leq \frac{32}{\lambda(\lambda \braket{v} + t)} \left(\frac{\tr(\nu)^2}{k} + 2\tr(\nu)\|\xi\|_\infty + 4\|\xi\|_\infty^2 \right)\\
        &\leq \frac{160}{\lambda(\lambda \braket{v} + t)} \left(\frac{\tr(\nu)^2}{k} + k \|\xi\|_\infty^2 \right)\\
        &\leq \frac{80 \tr(\nu)}{\lambda k} + \frac{160 k \|\xi\|_\infty^2}{\lambda (\lambda \braket{v} + t)}.
    \end{align*}
    In the third line we used Young's inequality. In the fourth, that $\tr(\nu) \leq t$.
\end{proof}

\subsection{Full Algorithm}

We may now state the full PCA algorithm and its guarantees. At any point where eigenvectors are chosen from a zero matrix, or the matrix is degenerate more generally, orthogonal unit vectors may be chosen arbitrarily; we address this in our arguments.

\begin{algorithm}[ht!]
\caption{Subspace Pruning}
\label{alg:pca}

\setcounter{AlgoLine}{0}
\LinesNumbered
\everypar={\nl}

\KwIn{Parameters $n, \Delta, \eta, \varepsilon, \delta$} 
\KwIn{Copies of unknown state $\rho$}

Set subspace dimension $k = \left\lceil n\eta / (\Delta \varepsilon)\right\rceil$

Set mass threshold $p = \varepsilon\Delta / n$

Set block size $M = O\left( \frac{n^5 \eta^2}{\Delta^3 \varepsilon^3} \log\left( \frac{n^3 \eta}{\Delta^2 \varepsilon^2 \delta} \right)\right)$

Initialize $1\times 1$ matrices $U_x = 1$, $x \in \{0,1\}^n$

\For{$\ell = n-1, \ldots, 1 $}{
    Set $\tilde{U}_s = \ketbra{0}\otimes U_{s0} + \ketbra{1} \otimes U_{s1}$, for $s \in \{0,1\}^\ell$
    
    \uIf{$2^{n-\ell} \leq k$}
    {
        Set $U_s = \tilde{U}_s$, for $s \in \{0,1\}^\ell$
    }
    \Else{
        Run \cref{alg:subspace-tomography} on $O(2^n n^2 M \log(1/p))$ fresh copies of $\rho$, at layer $\ell$, with mass threshold $p/2^\ell$, and with partial unitaries $\{\tilde{U}_s\}_{s \in \{0,1\}^\ell}$; denote returned matrices $\hat\rho_s$.

        Let $V_s$ be a matrix whose columns are top-$k$ eigenvectors of $\hat\rho_s$

        Set $U_s = \tilde{U}_s V_s$ for all $s$
    }
}

Set $\tilde{U}_\emptyset = \ketbra{0}\otimes U_{0} + \ketbra{1} \otimes U_{1}$

Run \cref{alg:subspace-tomography} at layer zero, on $O(2^n n M \log(1/p))$ fresh copies, and with the partial unitary $\tilde{U}_\emptyset$ to obtain a Hermitian matrix $\hat\rho_\emptyset$

Take the top unit eigenvector $\ket{v_\emptyset}$ of $\hat\rho_\emptyset$ and return $\hatket{\psi} = \tilde{U}_\emptyset \ket{v_\emptyset}$

\end{algorithm}

\begin{lemma}\label{lem:pca-output}
    Let $\Delta, \eta > 0$, and $\rho$ be an unknown $n$-qubit quantum state with top eigenvector $\ket\psi$, top eigenvalue (of multiplicity one) at least $1-\eta$, and spectral gap at least $\Delta$.
    Let $\varepsilon, \delta > 0$, with $\varepsilon \leq \eta$.
    Then, given $\Delta, \eta, \delta, \varepsilon$, and $N$ copies of $\rho$, where 
    \begin{align*}
        N = \tilde{O}\left( \frac{2^n \eta^2}{\Delta^3 \varepsilon^3} \log( 1/\delta)\right) \; ,
    \end{align*}
    \cref{alg:pca} produces a pure state $\hatket{\psi}$ such that $\left|\hatbraket{\psi}{\psi}\right|^2 \geq 1-\varepsilon$ with probability at least $1-\delta$. Here, we have suppressed polylogarithmic factors in $1/\varepsilon, 1/\Delta, \eta$, and $2^n$.
\end{lemma}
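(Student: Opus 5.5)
The plan is to track, layer by layer, how much of the target eigenvector's energy is lost by the pruning, and then to finish with \cref{lem:projected-davis-kahan} at the root. Write $\rho = \lambda\ketbra{\psi} + \nu$ with $\nu \succeq 0$, $\nu\ket\psi = 0$, $\tr\nu = 1-\lambda \le \eta$ and $\|\nu\|_\infty \le \lambda - \Delta$. For a prefix $s$ set $\ket{\psi_s} = (\bra{s}\otimes I)\ket\psi$ and $\nu_s = (\bra s\otimes I)\nu(\ket s\otimes I)$. The compressed block is then
\begin{align*}
    \sigma_s := (\bra s\otimes \tilde U_s^\dagger)\rho(\ket s\otimes \tilde U_s) = \lambda \ketbra{v_s} + \tilde U_s^\dagger \nu_s \tilde U_s, \qquad \ket{v_s} = \tilde U_s^\dagger\ket{\psi_s},
\end{align*}
which has exactly the form required by \cref{lem:eigenspace-truncation-quadratic}. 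Take $t_s = \tr\nu_s$, so that $\sum_{s\in\{0,1\}^\ell} t_s \le \eta$. Because $\tilde U_s$ is block diagonal in the children's isometries, the projectors nest. It is cleanest to define the global isometry $W_\ell = \sum_s \ket s\otimes U_s$ at layer $\ell$ and its projector $P_\ell$. Then $\|(I - P_\ell)\ket\psi\|^2$ equals $\|(I-P_{\ell+1})\ket\psi\|^2$ plus the pruning loss $\sum_s \|(I-V_sV_s^\dagger)\ket{v_s}\|^2$, by orthogonality of the discarded pieces. Telescoping, the total loss is $\sum_\ell\sum_s \|(I-V_sV_s^\dagger)\ket{v_s}\|^2$. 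Layers with $2^{n-\ell}\le k$ contribute nothing.

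The bulk of the work is bounding each layer's loss. At layers where pruning happens, I condition on the past, which fixes $\tilde U_s$; all copies are fresh, so the analysis of \cref{sec:subspace-tomography} applies. Split prefixes by mass $m_s = \tr(\ketbra s\otimes I\rho)$ against the threshold $p_\ell = p/2^\ell$.

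For heavy prefixes, \cref{lem:eigenspace-truncation-quadratic} gives loss $\lesssim t_s/(\lambda k) + k\|\xi_s\|^2_\infty/(\lambda(\lambda\|v_s\|^2 + t_s))$. I will use $\lambda\|v_s\|^2 + t_s \ge \lambda \|v_s\|^2 + \tr(\tilde U_s^\dagger\nu_s\tilde U_s) = \tr \sigma_s$. The step I expect to be the main obstacle is that \cref{lem:subspace-tomography-concentration} normalizes by $m_s$, not by $\tr\sigma_s$, and $\tr\sigma_s$ may be much smaller. To close this gap I would set $t_s$ as large as $m_s$ (the lemma allows any $t\ge\tr\nu$), so that $\lambda\|v_s\|^2 + t_s \ge m_s\cdot\min(\lambda,1)$ up to constants. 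The price is that the first term becomes $m_s/(\lambda k)$, whose sum over $s$ is $\le 1/(\lambda k)$, not $\eta/(\lambda k)$. That would force $k\gtrsim n/\varepsilon$ rather than $n\eta/(\Delta\varepsilon)$. So I would instead pick $t_s = \tr\nu_s + \tau m_s$ with $\tau \approx \eta$. The first terms then sum to $O(\eta/(\lambda k))$ per layer, and the second terms sum to $O(k\cdot k\ell\log/(\eta\lambda M))$ by \cref{lem:subspace-tomography-concentration}. Summed over $n$ layers, this gives total loss $\lesssim n\eta/k + n^2k^2\log/(\eta M)$. With $k = \lceil n\eta/(\Delta\varepsilon)\rceil$ the first term is $\le \Delta\varepsilon$ and the chosen $M$ makes the second $\lesssim \Delta\varepsilon$ (this is where $M \sim n^5\eta^2/(\Delta^3\varepsilon^3)$ arises, after checking exponents).

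For light prefixes, the total squared amplitude of $\ket\psi$ they carry is at most $\sum_{m_s<p_\ell} m_s/\lambda \le p/\lambda$ per layer. Over $n$ layers this is $\le np/\lambda = \varepsilon\Delta/\lambda$, so the whole light part can be charged as lost. The final bound is $\|(I-P_0)\ket\psi\|^2 \lesssim \Delta\varepsilon/\lambda$, where I use $\lambda \ge 1-\eta$, bounded below under the implicit assumption that $\eta$ is at most a constant.

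At the root, the root call to \cref{alg:subspace-tomography} produces $\hat\rho_\emptyset = \tilde U_\emptyset^\dagger\rho\tilde U_\emptyset + \xi_\emptyset$ with $\|\xi_\emptyset\|_\infty^2 \lesssim k\log/M$. Apply \cref{lem:projected-davis-kahan} in the lifted space: with $\Pi = \tilde U_\emptyset\tilde U_\emptyset^\dagger$, the lifted estimate $\tilde U_\emptyset \hat\rho_\emptyset \tilde U_\emptyset^\dagger = \Pi\rho\Pi + \tilde U_\emptyset\xi_\emptyset\tilde U_\emptyset^\dagger$ has top eigenvector $\hatket\psi$. This yields infidelity $\lesssim (\lambda/\Delta)\cdot\Delta\varepsilon/\lambda + k\log/(M\Delta^2) \lesssim \varepsilon$ after adjusting constants.

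Finally, a union bound over the $n$ layers with $\delta/n$ each gives the failure probability. The copy count is $n$ calls of $O(2^n n^2 M\log(1/p))$, which equals $\tilde O(2^n\eta^2/(\Delta^3\varepsilon^3)\log(1/\delta))$ as claimed.
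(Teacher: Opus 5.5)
\textbf{Gap: the choice $\tau \approx \eta$ costs a factor $1/\eta$.} Your structure matches the paper: nested projectors, telescoping the pruning loss, the light/heavy split, \cref{lem:eigenspace-truncation-quadratic} on heavy prefixes, and \cref{lem:projected-davis-kahan} at the root. However, the step you flagged as the main obstacle is resolved with a $1/\eta$ loss, and the claim that the exponents check out is false.

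With $t_s = \tr\nu_s + \tau m_s$ and $\tau \approx \eta$, the noise term is at most $k\|\xi_s\|_\infty^2/(\lambda \eta m_s)$. \cref{lem:subspace-tomography-concentration} gives $\lesssim k n^2 \log(\cdot)/M$ per layer once the $\log(2^\ell/p)$ is included. Summing over $n$ layers and multiplying by $\lambda/\Delta$ at the root gives an infidelity contribution of order $k^2 n^3 \log(\cdot)/(\Delta \eta M)$. With $k \approx n\eta/(\Delta\varepsilon)$, making this $\lesssim \varepsilon$ requires $M \gtrsim n^5 \eta \log(\cdot)/(\Delta^3\varepsilon^3)$, not $n^5\eta^2/(\Delta^3\varepsilon^3)$.

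With the $M$ actually fixed in \cref{alg:pca}, your bound reads $1 - \left|\hatbraket{\psi}{\psi}\right|^2 \lesssim \varepsilon + \varepsilon/\eta$. In the nearly pure regime $\eta = \varepsilon$ this is $\Omega(1)$. Equivalently, your route only proves $N = \tilde O(2^n \eta/(\Delta^3\varepsilon^3))$, e.g.\ $2^n/\varepsilon^2$ rather than $2^n/\varepsilon$ for pure states. Tuning $\tau$ cannot fix this. The first term forces $k \gtrsim n\max(\eta,\tau)/(\Delta\varepsilon)$, after which you need $M \gtrsim n^5 \max(\eta,\tau)^2/(\tau \Delta^3\varepsilon^3)$, which is minimized at $\tau = \eta$.

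\textbf{Missing idea: make the slack equal to the deficit.} Take $t_s = \tr\nu_s + \lambda\|(I-\tilde\Pi_s)\ket{\psi_s}\|_2^2$ with $\tilde\Pi_s = \tilde U_s \tilde U_s^\dagger$. Since $m_s = \lambda\|v_s\|_2^2 + \lambda\|(I-\tilde\Pi_s)\ket{\psi_s}\|_2^2 + \tr\nu_s$, this gives $\lambda\|v_s\|_2^2 + t_s = m_s$ exactly. The noise term is then $k\|\xi_s\|_\infty^2/(\lambda m_s)$, which matches the normalization of \cref{lem:subspace-tomography-concentration} with no $1/\eta$.

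The price appears in the first term as $\|(I-\tilde\Pi_s)\ket{\psi_s}\|_2^2/k$, which is the energy already lost below $s$. These quantities sum over $s$ to $L_{\ell+1}$, where $L_\ell = \|(I-P_\ell)\ket\psi\|_2^2$. Your additive telescoping therefore becomes the multiplicative recursion
\[
L_\ell \le \Big(1+\frac{c}{k}\Big) L_{\ell+1} + \frac{c\,\eta}{\lambda k} + \frac{p}{\lambda} + \frac{c\,k}{\lambda}\sum_{s\ \mathrm{heavy}} \frac{\|\xi_s\|_\infty^2}{m_s}.
\]
Because $k \ge n$, the compounded factor $(1+c/k)^n \le e^c$ is a constant. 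This is exactly the paper's argument, and it is what produces the $\eta^2$ in $M$.

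\textbf{Minor point at the root.} $\hatket{\psi}$ need not be a top eigenvector of $\tilde U_\emptyset \hat\rho_\emptyset \tilde U_\emptyset^\dagger$ when $\lambda_1(\hat\rho_\emptyset) < 0$, because the complement of $\tilde\Pi_\emptyset$ then carries the top eigenvalue $0$. The paper handles this by subtracting $\|\xi_\emptyset\|_\infty (I - \tilde\Pi_\emptyset)$ before applying \cref{lem:projected-davis-kahan}.
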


Throughout the evolution of \cref{alg:pca}, the matrices $\{\tilde{U}_s\}_{s \in \{0,1\}^\ell}$ are adaptively chosen. To invoke the guarantees for \cref{alg:subspace-tomography}, we condition on the full layer in our analysis.

\begin{proof}
    For $0 \leq \ell \leq n-1$, and $s \in \{0,1\}^\ell$, let $\ket{\psi_s} = (\bra{s} \otimes I_{2^{n-\ell}}) \ket{\psi}$, with $\ket{\emptyset} = 1$ for convenience. \begin{align*}
        \Pi_s = U_s U_s^\dagger, \quad \tilde{\Pi}_s = \tilde{U}_s \tilde{U}_s^\dagger, \quad r_s = \|(I - \Pi_s)\ket{\psi_s}\|_2^2.
    \end{align*}
    First observe that by definition and orthogonality, \begin{align*}
        \|(I-\tilde{\Pi}_s)\ket{\psi_s}\|_2^2 = r_{s0} + r_{s1}.
    \end{align*}
    Moreover, \begin{align}\label{eq:error-invariant}
        r_s &= \|(I-\tilde{\Pi}_s)\ket{\psi_s}\|_2^2 + \|(\tilde{\Pi}_s - \Pi_s)\ket{\psi_s}\|_2^2\nonumber\\
        &= r_{s0} + r_{s1} + \|(I-\Pi_s)\tilde{\Pi}_s \ket{\psi_s}\|_2^2.
    \end{align}
    For all $\ell$ such that $2^{n-\ell} \leq k$, it holds that $\tilde{\Pi}_s = \Pi_s$. Also note that $r_x = 0$ for all $x \in \{0,1\}^n$. Thus, if $\ell_{0} = \lceil n - \log_2(k)\rceil \geq 1$, expanding \cref{eq:error-invariant} gives \begin{align}\label{eq:errors-terminate}
        \sum_{s \in \{0,1\}^{\ell_0}} r_s = 0.
    \end{align}

    If $\ell_0 > 1$, consider some $1 \leq \ell < \ell_0$.
    Conditional to the full layer $\{\tilde{U}_s\}_{s \in \{0,1\}^\ell}$, \Cref{alg:subspace-tomography} returns estimates of the form \begin{align*}
        \hat\rho_s = (\bra{s} \otimes \tilde{U}_s^\dagger)\rho (\ket{s} \otimes \tilde{U}_s) + \xi_s.
    \end{align*}
    By construction, the columns of $V_s$ are an orthonormal system spanning a top-$k$ eigenspace of the matrix \begin{align*}
        \hat\rho_s &= \tilde{U}_s^\dagger (\bra{s} \otimes I)\rho (\ket{s} \otimes I) \tilde{U}_s + \xi_s\\
        &= \lambda \tilde{U}_s^\dagger \ketbra{\psi_s} \tilde{U}_s + (\bra{s}\otimes \tilde{U}_s^\dagger)\nu (\ket{s}\otimes \tilde{U}_s) + \xi_s
    \end{align*}
    We observe that \begin{align*}
       \tr((\bra{s}\otimes \tilde{U}_s^\dagger)\nu (\ket{s}\otimes \tilde{U}_s)) \leq \tr(\ketbra{s} \otimes I \nu).
    \end{align*}
    Also observe that \begin{align*}
        \tr(\ketbra{s} \otimes I \rho) = \lambda \bra{\psi_s} \tilde{U}_s\tilde{U}_s^\dagger \ket{\psi_s} + \lambda \|(I-\tilde{\Pi}_s)\ket{\psi_s}\|_2^2 + \tr(\ketbra{s} \otimes I \nu).
    \end{align*}
    Noting that $\hat\rho_s$ has dimension larger than $k$, and assuming $\tilde{U}_s^\dagger \ket{\psi_s} \neq 0$, we may combine with \cref{lem:eigenspace-truncation-quadratic} to obtain 
    \begin{align*}
        \|(I-V_sV_s^\dagger) \tilde{U}_s^\dagger \ket{\psi_s}\|_2^2 &\lesssim \frac{1}{k} \|(I-\tilde{\Pi}_s)\ket{\psi_s}\|_2^2 + \frac{\tr(\ketbra{s} \otimes I \nu)}{\lambda k} + \frac{k \|\xi_s\|_\infty^2}{\lambda \tr(\ketbra{s} \otimes I \rho)}.
    \end{align*}
    Moreover, 
    $(I-\Pi_s)\tilde{\Pi}_s = \tilde{U}_s(I-V_sV_s^\dagger)\tilde{U}_s^\dagger$, and hence
    \begin{align}\label{eq:eigenspace-truncation-quadratic-application}
        \|(I-\Pi_s)\tilde{\Pi}_s \ket{\psi_s}\|_2^2 \lesssim \frac{1}{k} \|(I-\tilde{\Pi}_s)\ket{\psi_s}\|_2^2 + \frac{\tr(\ketbra{s} \otimes I \nu)}{\lambda k} + \frac{k \|\xi_s\|_\infty^2}{\lambda \tr(\ketbra{s} \otimes I \rho)}.
    \end{align}
    In the case where $\tilde{\Pi}_s \ket{\psi_s} = 0$, the bound is vacuous.
    One may expand \cref{eq:error-invariant} once more; letting $c$ denote an absolute constant, we have 
    \begin{align*}
        \|(I-\tilde{\Pi}_\emptyset)\ket{\psi}\|_2^2 &= \sum_{\substack{s \in \{0,1\}\\\tr(\ketbra{s}\otimes I \rho) < p/2}} r_s +  \sum_{\substack{s \in \{0,1\}\\\tr(\ketbra{s}\otimes I \rho) \geq p/2}} r_s\\
        &\leq \sum_{\substack{s \in \{0,1\}\\\tr(\ketbra{s}\otimes I \rho) < p/2}} r_s + \sum_{s \in \{0,1\}^2} r_s + \sum_{\substack{s \in \{0,1\}\\\tr(\ketbra{s}\otimes I \rho) \geq p/2}} \|(I-\Pi_s)\tilde{\Pi}_s\ket{\psi_s}\|_2^2\\
        &\leq \left(1 + \frac{c}{k}\right) \sum_{\ell=1}^2 \sum_{\substack{s \in \{0,1\}^\ell\\\tr(\ketbra{s}\otimes I \rho) < p/2^\ell}} r_s + \left(1 + \frac{c}{k} \right) \sum_{\substack{s \in \{0,1\}^2\\\tr(\ketbra{s}\otimes I \rho) \geq p/4}}r_s \\
        &\hspace{8em} + c\sum_{\substack{s \in \{0,1\}\\\tr(\ketbra{s}\otimes I \rho) \geq p/2}} \left(\frac{\tr(\ketbra{s} \otimes I \nu)}{\lambda k} + \frac{k \|\xi_s\|_\infty^2}{\lambda \tr(\ketbra{s} \otimes I \rho)}\right).
    \end{align*}
    By induction and \cref{eq:errors-terminate}, \begin{align*}
        \|(I-\tilde{\Pi}_\emptyset)\ket{\psi}\|_2^2 &\leq \left(1 + \frac{c}{k}\right)^{\ell_0 - 1} \sum_{\ell=1}^{\ell_0} \sum_{\substack{s \in \{0,1\}^\ell\\\tr(\ketbra{s}\otimes I \rho) < p/2^\ell}} r_s \\
        &+ c\left(1 + \frac{c}{k}\right)^{\ell_0-2} \sum_{\ell=1}^{\ell_0-1} \sum_{\substack{s \in \{0,1\}^\ell\\\tr(\ketbra{s}\otimes I \rho) \geq p/2^\ell}}\left(\frac{\tr(\ketbra{s} \otimes I \nu)}{\lambda k} + \frac{k \|\xi_s\|_\infty^2}{\lambda \tr(\ketbra{s} \otimes I \rho)}\right)\\
        &\leq ce^{cn/k} \left( \frac{np}{\lambda} + \frac{n\eta}{\lambda k} + \frac{k}{\lambda} \sum_{\ell=1}^{\ell_0 -1} \sum_{\substack{s \in \{0,1\}^\ell\\\tr(\ketbra{s}\otimes I \rho) \geq p/2^\ell}} \frac{\|\xi_s\|_\infty^2}{\tr(\ketbra{s} \otimes I \rho)} \right).
    \end{align*}
    In the second inequality, we used $r_s \leq \braket{\psi_s} \leq \tr(\ketbra{s} \otimes I \rho)/\lambda$.
    If alternatively $\ell_0 \leq 1$, the left-hand side is zero and the bound still holds.

    Provided that $\|(I- \tilde{\Pi}_\emptyset) \ket{\psi}\|_2^2 < 1$, which will hold by our choice of parameters, it also holds that $\tilde{\Pi}_\emptyset \ket\psi \neq 0$. At the zeroth layer, consider the $2^n \times 2^n$ matrix 
    \begin{align*}
        \hat\rho &= \tilde{U}_\emptyset \hat\rho_{\emptyset} \tilde{U}_\emptyset^\dagger - \|\xi_\emptyset\|_\infty (I-\tilde{\Pi}_\emptyset)\\
        &= \tilde{\Pi}_\emptyset \rho \tilde{\Pi}_\emptyset + \tilde{U}_\emptyset \xi_\emptyset \tilde{U}_\emptyset^\dagger - \|\xi_\emptyset\|_\infty (I-\tilde{\Pi}_\emptyset).
    \end{align*}
    Note that by construction, $\hatket{\psi} = \tilde{U}_\emptyset \ket{v_\emptyset}$ is a top eigenvector of $\hat\rho$, and \begin{align*}
        \big\|\tilde{U}_\emptyset \xi_\emptyset \tilde{U}_\emptyset^\dagger - \|\xi_\emptyset\|_\infty (I-\tilde{\Pi}_\emptyset)\big\|_\infty \leq \|\xi_\emptyset\|_\infty.
    \end{align*}
    So by \cref{lem:projected-davis-kahan}, \begin{align*}
        1 - \left|\brahatket{\psi}{\psi}\right|^2 &\lesssim \frac{\lambda}{\Delta}\|(I - \tilde{\Pi}_\emptyset) \ket{\psi}\|_2^2 + \frac{\| \xi_\emptyset \|_\infty^2}{\Delta^2}\\
        &\lesssim e^{cn/k} \left( \frac{np}{\Delta} + \frac{n \eta}{\Delta k} + \frac{k}{\Delta} \sum_{\ell=0}^{\ell_0-1}\sum_{\substack{s \in \{0,1\}^\ell\\\tr(\ketbra{s}\otimes I \rho) \geq p/2^\ell}}  \frac{\|\xi_s\|_\infty^2}{\tr(\ketbra{s} \otimes I \rho)} \right)\\
        &\lesssim \frac{np}{\Delta} + \frac{n\eta}{\Delta k} + \frac{k}{\Delta} \sum_{\ell=0}^{\ell_0-1}\sum_{\substack{s \in \{0,1\}^\ell\\\tr(\ketbra{s}\otimes I \rho) \geq p/2^\ell}}  \frac{\|\xi_s\|_\infty^2}{\tr(\ketbra{s} \otimes I \rho)}.
    \end{align*} 
    In the second and third lines, we used $k \geq 1/\Delta \lor n$. 

    By \cref{lem:subspace-tomography-concentration} and conditioning, it holds with probability at least $1-\delta$ that for all $\ell \in \{0, \ldots, n-1\}$, \begin{align*}
        \sum_{\substack{s \in \{0,1\}^\ell\\\tr(\ketbra{s} \otimes I \rho) \geq p/2^\ell}} \frac{\|\xi_s \|_\infty^2}{\tr(\ketbra{s} \otimes I\rho)} \lesssim \frac{k n^2 \log(n^2 k/(\Delta \varepsilon \delta))}{M}.
    \end{align*}
    By our choice of $p$ and $k$, \begin{align*}
        1 - \left|\brahatket{\psi}{\psi}\right|^2 &\lesssim \varepsilon + \frac{k}{\Delta} \sum_{\ell=0}^{\ell_0-1}\sum_{\substack{s \in \{0,1\}^\ell\\\tr(\ketbra{s}\otimes I \rho) \geq p/2^\ell}}  \frac{\|\xi_s\|_\infty^2}{\tr(\ketbra{s} \otimes I \rho)}\\
        &\lesssim \varepsilon + \frac{k^2 n^3 \log(n^2 k/(\Delta \varepsilon \delta))}{\Delta M}. 
    \end{align*} 
    The claim follows by choice of $M$ in \cref{alg:pca}.
\end{proof}

\begin{lemma}\label{lem:pca-runtime}
    \cref{alg:pca} runs in time \begin{align*}
        \tilde{O}\left( \frac{2^n}{\Delta \varepsilon} \cdot \poly\left(\frac{\eta}{\Delta\varepsilon}\right) \cdot \log(1/\delta)\right)\; ,
    \end{align*} where we have suppressed logarithmic factors in the quantities $1/\varepsilon, 1/\Delta, \eta$, and $2^n$.
\end{lemma}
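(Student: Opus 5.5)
The plan is to sum the cost of the loop in \cref{alg:pca} layer by layer, charging each layer to \cref{lem:subspace-tomography-runtime}, and then substitute the parameter choices $k = \lceil n\eta/(\Delta\varepsilon)\rceil$, $p = \varepsilon\Delta/n$, and the stated block size $M$. Layers with $2^{n-\ell} \leq k$ need no tomography. There we only form $\tilde{U}_s = \ketbra{0}\otimes U_{s0} + \ketbra{1}\otimes U_{s1}$ for all $2^\ell$ prefixes, and each such matrix has $2^{n-\ell}\cdot \min(2^{n-\ell},2k)$ entries. So the cost per such layer is $O(2^n k)$, and $O(n 2^n k)$ over all such layers.

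For a layer with $2^{n-\ell} > k$, there are three costs.
\begin{itemize}
\item Assembling $\tilde{U}_s$ again costs $O(2^n k)$.
\item The call to \cref{alg:subspace-tomography} is made with mass threshold $p_\ell = p/2^\ell \leq 1/2$. By \cref{lem:subspace-tomography-runtime} it costs $O(2^n n^2 M \poly(k) \log(2^\ell/p)^2)$, and $\log(2^\ell/p) = O(n + \log(n/(\varepsilon\Delta)))$ is polylogarithmic in the suppressed quantities.
\item Diagonalizing each $\hat\rho_s$ (of dimension $k_0 \leq 2k$) takes $\poly(k)$ time per prefix, hence $O(2^\ell \poly(k))$. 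Forming $U_s = \tilde{U}_s V_s$ is a $2^{n-\ell}\times k_0$ by $k_0 \times k$ product, hence $O(2^n k^2)$ over all $s$.
\end{itemize}
The final layer-zero call is bounded the same way, and its top eigenvector and the product $\tilde{U}_\emptyset\ket{v_\emptyset}$ cost $\poly(k) + O(2^n k)$. Summing over at most $n$ layers gives total time $O(2^n n^3 M \poly(k) \log(2^n n/(\varepsilon\Delta))^2)$.

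It remains to plug in the parameters. We have $M = \tilde{O}\big(\tfrac{\eta^2}{\Delta^3\varepsilon^3}\log(1/\delta)\big)$ up to $\poly(n)$ and logarithmic factors, since the logarithm in $M$ splits as $\log(n^3\eta/(\Delta^2\varepsilon^2)) + \log(1/\delta)$ and only the second piece is non-polylogarithmic. We also have $\poly(k) = \poly(n)\cdot\poly(\eta/(\Delta\varepsilon))$. To match the stated form, write
\[
\frac{\eta^2}{\Delta^3\varepsilon^3} = \frac{1}{\Delta\varepsilon}\cdot\Big(\frac{\eta}{\Delta\varepsilon}\Big)^2 .
\]
The $1/(\Delta\varepsilon)$ factor is pulled out, and the remaining square is absorbed into $\poly(\eta/(\Delta\varepsilon))$ together with $\poly(k)$. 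The $\poly(n)$ factors are polylogarithmic in $2^n$ and are suppressed, giving $\tilde{O}\big(\tfrac{2^n}{\Delta\varepsilon}\poly(\tfrac{\eta}{\Delta\varepsilon})\log(1/\delta)\big)$.

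The main obstacle I expect is bookkeeping rather than mathematics. First, one must check that the $\log(1/p_\ell)^2$ factor and the $\log(1/\delta)$ hidden inside $M$ enter only as stated. The $\log(1/\delta)$ enters linearly through $R$; the Lepskii step's $\log^2$ is in $\beta_{max}^2/\beta_{min}^2$, not in $\delta$. Second, one must confirm that the $\poly(d)$ cost in \cref{lem:Lepskii}, with $d = k_0$, is what produces $\poly(k)$, so the spectral computations never touch the ambient dimension $2^{n-\ell}$. If one prefers to avoid $\poly(n)$ being absorbed via $\log 2^n$, note that $k\geq n$ anyway.
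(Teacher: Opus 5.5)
Your proposal is correct and follows essentially the same route as the paper: charge each layer $O(2^n k)$ for assembling the $\tilde U_s$, bound the tomography call by \cref{lem:subspace-tomography-runtime}, add $2^\ell \poly(k)$ for the eigendecompositions and $O(2^n k^2)$ for forming $U_s = \tilde U_s V_s$, then substitute $k$, $p$, $M$. You make explicit the bookkeeping the paper leaves implicit: the no-tomography layers, the factorization $\eta^2/(\Delta^3\varepsilon^3) = (\Delta\varepsilon)^{-1}(\eta/(\Delta\varepsilon))^2$, and suppressing $\poly(n)$ as polylogarithmic in $2^n$. Your closing aside about $k \geq n$ does not actually remove the $\poly(n)$ factor, since $k$ itself carries a factor of $n$, but nothing in the argument depends on it.
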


\begin{proof}
    Within each iteration $\ell$, constructing all the temporary bases $\tilde{U}_s$, $s \in \{0,1\}^\ell$ from old ones takes at most $O(2^{n} k)$ time. 
    The choice of $M$, $p$ and \cref{lem:subspace-tomography-runtime} imply that the call to \cref{alg:subspace-tomography} runs in
    \begin{align*}
        O(2^n n^2 M \poly(k) \log(2^\ell/p)^2) = \tilde{O}\left( \frac{2^n}{\Delta \varepsilon} \cdot \poly\left(\frac{\eta}{\Delta\varepsilon}\right) \log(1/\delta)\right).
    \end{align*} Taking all top-$k$ eigenbases may be done in $2^{\ell} \poly(k)$ time, and computation of all $U_s$ in $2^{n} k^2$ time. The overall runtime is as claimed.
\end{proof}



\newpage

\clearpage
\subsection*{Acknowledgments}

We thank William He for enlightening conversations at the beginning of this project. 

\subsection*{AI Disclosure}

Multiple technical steps and optimizations were derived in discussion with GPT-5.6 and GPT-6. 
The idea of iteratively maintaining subspaces was first proposed by the authors, and the details were then worked out in conversations with GPT.
The structured Pauli sampling scheme was initially proposed by GPT-5.6, but was later refined and clarified by the authors. 
The authors wrote the entirety of the present manuscript, and claim responsibility for any errors and omissions.

\bibliographystyle{alpha}
\bibliography{refs}




\end{document}